\newcommand*{\R}{\mathbb{R}}
\newcommand*{\N}{\mathbb{N}}
\newcommand*{\p}{\mathbf{p}}
\newcommand*{\s}{\mathbf{s}}
\newcommand*{\T}{[T]}
\DeclareMathOperator{\sgn}{sgn}
\newcommand*{\trlength}{1.3}
\newcommand*{\noderadius}{8pt}
\tikzset{
    node distance={\trlength cm},
    vert/.style = {draw, circle, inner sep = 0pt, minimum size = 2*\noderadius, text depth=0.0ex, text height=1.25ex},
    every path/.style = {-{Latex[length=2mm]}},
    bid/.style = {{Latex[length=2mm]}-{Latex[length=2mm]}},
    every label/.append style={rectangle},
    fac/.style = {circle,fill,inner sep=1.5pt},
    unchosen/.style = {-, dashed},
    brace/.style = {decorate,decoration={brace,amplitude=7pt},-},
    bracelabel/.style = {midway,yshift=-0.55cm},
    invisible/.style = {rectangle, draw=none},
    box/.style = {rectangle, rounded corners, minimum height = 2*\noderadius + 2*\trlength cm},
    rect/.style = {thick, dotted}
}
\def\toffx{2.04}
\def\toffy{1.74}
\def\toffup{0.65}
\def\toffdown{0.71}
\def\toffleft{3}
\def\toffright{0.5}
\newcommand*{\boxes}[2]{
    \foreach \row / \price in {#2} {
        \coordinate(boxa\row) at (-\toffleft,      -\row*\toffy+\toffy+\toffup);
        \coordinate(boxb\row) at (#1*\toffx-\toffx+\toffright, -\row*\toffy+\toffy-\toffdown);
        \draw[rect] (boxa\row) rectangle (boxb\row);
        \node [below right, align=left] at (boxa\row) {\medmuskip=1mu\thinmuskip=1mu\thickmuskip=1mu$t=\row$\ifx\price\empty\else\\\medmuskip=1mu\thinmuskip=1mu\thickmuskip=1mu$p_\row=\price$\fi};
    }
}
\newcommand\makefromalph[1]{\number\numexpr`#1-`a\relax}
\newcommand{\increment}[1]{\expandafter\incrementhelper#1\relax}
\def\incrementhelper#1#2\relax{#1\the\numexpr#2+1\relax}
\newcommand*{\agent}[3]{
    \coordinate (coord#1#2) at (\makefromalph{#1}*\toffx, -#2*\toffy + \toffy);
    \node[vert] (#1#2) at (coord#1#2) {#3};
    
    \node[fac, label={[yshift=-5pt, xshift=3pt]above left:$#1_#2$}] (bat#1#2) at ($(coord#1#2) - (0.6, 0.5)$) {};
}
\newcommand*{\vertex}[3]{
    \coordinate (coord#1#2) at (\makefromalph{#1}*\toffx, -#2*\toffy + \toffy);
    \node[vert, label={[xshift=-5pt]below right:$#1_#2$}] (#1#2) at (coord#1#2) {#3};
}
\newcommand*{\charge}[3][]{
    \path (#2) edge["#3"', sloped, #1] (bat#2);
}
\newcommand*{\discharge}[3][]{
    \path (bat#2) edge["#3"', sloped, #1] (#2);
}
\newcommand*{\keep}[3][]{
    \path (bat#2) edge["#3", pos=0.19, #1] (bat\increment{#2});
}
\definecolor{cred}{HTML}{D81B60}
\definecolor{cblue}{HTML}{1E88E5}
\definecolor{cyellow}{HTML}{D09C00}
\definecolor{cgreen}{HTML}{5B8600}
\definecolor{cgray}{HTML}{AAAAAA}
\newtheorem{observation}{Observation}
\title{Social Welfare in Battery Charging Games}
\authorrunning{Krogmann, Lenzner, Skopalik, Sträubig}
\author{Simon Krogmann\inst{1}\orcidlink{0000-0001-6577-6756} \and Pascal Lenzner\inst{1}\orcidlink{0000-0002-3010-1019} \and Alexander Skopalik\inst{2}\orcidlink{0000-0002-4950-8708} \and \mbox{Tobias Sträubig\inst{3}\orcidlink{0009-0001-2362-5201}}}
\institute{Institute of Computer Science, University of Augsburg\\
\email{\{simon.krogmann, pascal.lenzner\}@uni-a.de}
\and Mathematics of Operations Research, University of Twente\\
\email{a.skopalik@utwente.nl}
\and Hasso Plattner Institute, University of Potsdam\\
\email{tobias.straeubig@student.hpi.de}}
\begin{document}

\maketitle

\setcounter{footnote}{0}

\begin{abstract}
   The recent rise of renewable energy produced by many decentralized sources yields interesting market design challenges for electrical grids. Balancing supply and demand in such networks is both a temporal and spatial challenge due to capacity constraints. The recent surge in the number of household-owned batteries, especially in regions with rooftop solar adoption, offers mitigation potential but often acts misaligned with grid-level objectives. In fact, the decision to charge or discharge a household-owned battery is a strategic choice by each battery owner governed by selfish incentives. This calls for an analysis from a game-theoretic point of view.

   We initiate this timely research direction by considering a game-theoretic setting where selfish agents strategically charge or discharge their batteries to increase their profit. In particular, we study a Stackel\-berg-like market model where a third party introduces price incentives, aiming to optimize renewable energy utilization while preserving grid feasibility. For this, we study the existence and the quality of equilibria under various pricing strategies. We find that the existence of equilibria crucially depends on the chosen pricing and that the obtained social welfare varies widely. This calls for more sophisticated market models and pricing mechanisms and opens up a rich field for future research in Algorithmic Game Theory on incentives in renewable energy networks.
\end{abstract}

\section{Introduction}
Renewable energy gained a large market share in recent years and it has even become the major energy source in some countries~\cite{eurostat}. With this transition, energy production shifted from being centralized and power-plant-based to various decentralized sources like household-owned rooftop solar panels or municipal wind turbines. However, the power grid that we currently rely on was built for centralized production and thus the new decentralized production model faces capacity problems when the energy supply from some regions meets the energy demand from other regions~\cite{hoogsteen17}. Moreover, besides these spatial challenges, also temporal challenges arise, since, for example, solar power is only available during the day but might be needed at night. One way to cope with this mismatch between production and consumption is adopting battery-based storage systems. This often happens on a small scale with household-owned batteries that can store a significant part of the home-produced energy to make it available for later use or to sell it to the grid. Within the last four years, the total battery storage in Europe went from 9.8 GWh in 2021 to 61.1 GWh in 2025. It is currently projected that this number will reach 400~GWh in 2029. Moreover, in the span of the last three years, 3 million new home batteries were connected to the power grid in Europe~\cite{home-batteries}.

This rise in the number of decentralized energy producers, who with their batteries can buy and sell energy at any given time, led to a significant growth of algorithmic trading of energy, which includes both algorithm-assisted training and pure algorithmic agents~\cite{algo-trading}, based on a survey of market participants in the Netherlands. Thus, battery charging decisions are made autonomously by automated agents. These agents try to sell surplus energy to the grid for a profit or they strategically buy cheap energy from the grid for later sale. All these agents interact with each other directly via the available capacities in the power grid and indirectly via the energy price.

However, all battery charging decisions by the agents must be supported by the available (and temporally changing) residual capacities of the power grid. Moreover, it must be ensured that the power grid can actually serve the demands of the customers (enough energy flow is available) and at the same time that selfish action of energy trading agents does not lead to an energy overload that would damage the valuable infrastructure (no capacities are exceeded by the energy flow). This yields a Stackelberg-like system where the Stackelberg leader, which could be the state-owned power grid administration or some other third-party authority, imposes a pricing scheme and possibly contractual penalties for maintaining power grid stability. The followers, i.e., the automatic trading agents, then try to maximize their profit by strategically buying and selling energy given the pricing scheme and avoiding the penalties. These actions of the agents can influence the overall energy flow in the network. For example, charging a battery at some time step with excess energy, i.e., energy that could not be consumed in this time step, allows for using this energy at a later time step to satisfy additional demand and thus increase the network's flow value. However, selfish battery charging decisions might not be optimal in this regard and thus might lead to a suboptimal utilization of excess energy.

The investigation of energy networks with strategic algorithmic energy trading agents opens up a rich field of study for the Algorithmic Game Theory community.
In this work, we set out to explore this setting. We consider a stylized model of an energy network with many algorithmic agents that are equipped with household-owned batteries and thus can selfishly buy and sell energy. Regarding the pricing, we assume that different time steps might have different prices that are known in advance. For example, these prices could be driven by the day-ahead market prices or they could be given by a third-party authority. Also, the algorithmic agents are limited by contractual penalties to ensure that the energy supply cannot be exceeded and to prevent power grid overload. Our main goal is to study the impact of the selfish algorithmic agents on the network's obtained energy flow. With this, we focus on an important aspect that, to the best of our knowledge, has not been studied before. This lays the foundation for further studies and more sophisticated models.


\subsection{Related Work}
Game theory has been widely applied to model strategic behavior in electricity markets. Recent surveys~\cite{8648326,10.3389/fenrg.2022.1009217,10.1063/5.0165108} provide a good overview of the work on this topic. There are multiple aspects in which our work differs from the related work.

A common feature in game-theoretic electricity market models is dynamic pricing, where prices are adjusted based on demand and supply to balance the grid efficiently~\cite{6266720,ehrhart2022congestion}. In contrast, our model assumes fixed energy prices per time step, where the agents' decisions do not affect the price. This removes complexities arising from the interaction between strategies and is in line with our assumption that each agent is a small household-based prosumer with very limited influence on the energy price.
Another key difference is how electricity is allocated to agents. Our model assumes a fixed amount of surplus electricity, which can be distributed to the agents in an arbitrary way. When the available electricity is fully purchased, strategy changes that would exceed the supply are never beneficial, causing inefficiencies that are central to our analysis. This contrasts many existing models that involve auction-based mechanisms or proportional allocation rules that attempt to distribute electricity fairly among competing agents ~\cite{10.1145/3396851.3397701,10.1145/3538637.3538843,8905496}.
Additionally, many works consider private electricity demand as the main driver of energy purchases. Some papers propose games in which agents seek to minimize the emerging costs by using energy storage~\cite{9026313,su11102763}. Our work, however, abstracts away private energy consumption and focuses purely on strategic energy storage and resale. In our model, electricity can be bought and sold at the same price, and households are reduced to their role as storage providers, making their utility a function of their trading strategy.

Other related work has considered some of the features of our model individually. In~\cite{ev-charging-game} a setting with multiple competing agents that strategically decide when it is best to charge an electric vehicle. The authors of~\cite{double-auctions} investigate market mechanisms for electricity markets with prosumers. 
Another model~\cite{collusion-prevention} investigates how to prevent collusion between prosumers which contrasts with some of our results where high cooperation is needed for better social welfare.
How electricity is routed within a time step is considered in~\cite{energy-flow}.

Market models for energy networks that are based on Stackelberg's model~\cite{Stackelberg} have been studied quite extensively, see e.g.~\cite{Liu17,Rahi19,cheng2024optimalpricingformulasmart} for a more game-theoretic view and \cite{Fochesato22,Dorahaki24} for a bi-level optimization approach.
Still, these works do not feature strategic battery charging and instead the main interaction of the agents is via the energy price.
Capacity constraints of the power grid and the resulting impact on the overall energy flow are ignored.

Finally, we emphasize that to the best of our knowledge, our paper is the first attempt to analyze the energy flow obtained by strategic battery charging agents within a capacity-constrained energy network from a game-theoretic point of view. Other electricity market models aim for a more realistic depiction of power grids while our approach is intentionally simplified to allow a focus on the energy flow. Much like abstract models like the Hotelling-Downs model~\cite{hotelling,downs} and its more recent variants~\cite{voronoigames,feldman-hotelling,Peters2018,alex-network-investment,ijcai-21}, which, despite their simplifications, provide valuable economic insights for spatial markets, our model is not designed to mirror real-world electricity markets directly. Instead, it serves as a theoretical foundation for studying strategic behavior in simplified settings and exploring potential model variations.

\subsection{Our Contribution}
We focus on capacitated energy networks with agents that are prosumers, i.e., that can produce and consume energy. Moreover, these prosumers have a battery that can be strategically used to charge excess energy at some time step and then later this energy can be sold back to the grid for a profit if there is demand. Our key contribution is that we explicitly model the impact on the obtained overall energy flow in the network - an aspect that was neglected in other models. 


Regarding the impact of the selfish battery charging decisions of the agents, we study the existence and the social welfare of equilibrium states, where the latter is the flow value of the induced maximum energy flow in the network. A higher value here means that more excess energy is utilized. We prove bounds on the price of anarchy (PoA) and the price of stability (PoS).

For instances with two time steps, we show that Nash equilibria exist for any pricing scheme. If prices are decreasing, then the PoS is unbounded, otherwise it is exactly~1. The PoA is trivially unbounded for decreasing and uniform prices but we show a tight bound of $2$ for ascending prices.
For the number of time steps $T$ larger than two, we show that for any instance, the uniform pricing scheme admits Nash equilibria, but that even for the very basic class of ascending pricing schemes, Nash equilibria might not exist.
For prices that increase over time, we show a high lower bound on the PoS of $\lfloor\frac{T}{2}\rfloor$. For prices that depend on the energy demand in each time step, we even get an infinite PoS. This is contrasted with a PoS of $1$ for uniform prices. Regarding the PoA, things are even worse. We show that even allowing cooperation among at most $T-2$ agents yields an unbounded PoA for any pricing scheme. If at least $T-1$ agents can cooperate, then the PoA is bounded by $T$, which is asymptotically tight.

Thus, overall we show that selfish battery charging decisions can severely impact the obtained energy flow in the network. Moreover, this holds for all pricing schemes and even under extensive agent cooperation. This shows that this aspect should also be studied in other models for energy trading networks.

\section{Model and Preliminaries}

\paragraph{\textbf{Electricity Grid.}}
We model the electricity network as a directed (host) graph ${H=(V,E)}$ where the set of vertices corresponds to households and the set of edges to power lines.
We consider $T\in\N$ time steps and denote $\T \coloneq \{1,\ldots, T\}$ and for each time step $t \in \T$, we let $H_t \coloneq (V_t, E_t)$ be a copy of $H$.
Each edge $e_t \in E_t$ has a capacity $\kappa(e_t) \in \R$.
Note that this allows grid capacities to vary over time, e.g., to model residual capacities.
Each node $v_t$ supplies or demands electricity given by the positive or negative value $d(v_t) \in \R$, respectively.

\paragraph{\textbf{Battery Agents.}}
For each household battery, there is a battery agent $b \in B$ located at a location~$v^b \in V$. For convenience, we will introduce for battery agent $b \in B$ an additional \emph{battery node} $b_t$ for each time step $t$ and \emph{battery edges} $(b_t,b_{t+1})$ for $t \in [T-1]$ with capacities $\kappa((b_t,b_{t+1}))$.
Furthermore, each battery node~$b_t$ is connected to node~$v^b_t$ by a bidirectional transaction edge which, for now, has infinite capacity. However, we later use these edges to model agents' decisions to charge or discharge the batteries.

\paragraph{\textbf{Energy Network Graph.}}
With that, we define the energy network graph $G=(V_G,E_G)$ as the union of all graphs $H_t$ for the individual time steps, the battery nodes and edges and the transaction edges, i.e.,
\begin{align*}
    V_G =& \textstyle\bigcup\limits_{t \in \T} \Bigl( {V_t} \cup \textstyle\bigcup\limits_{b \in B} b_t\Bigr)\\
    \text{and}\quad E_G =& \textstyle\bigcup\limits_{t \in \T} \Bigl( {E_t} \cup \textstyle\bigcup\limits_{b \in B} \{ (b_t, v^b_t), (v^b_t, b_t) \} \Bigr) \cup \textstyle\bigcup\limits_{t \in [T-1],b \in B} \{ (b_t,b_{t+1})\} \,.
\end{align*}
We give an example of such an energy network graph in \Cref{fig:structure}.
\begin{figure}[h]
\def\toffx{2.04}
\def\toffy{1.5}
\def\toffup{0.5}
\def\toffdown{0.6}
\def\toffleft{3}
\def\toffright{0.5}
    \centering
    \begin{tikzpicture}

        \agent{a}{1}{+1}
        \agent{b}{1}{}
        \agent{a}{2}{}
        \agent{b}{2}{-1}
        
        \path (a1) edge[bend left=10, "1", pos=0.37, inner sep=2pt] (b1);
        \path (a2) edge[bend left=10, "1", pos=0.37, inner sep=2pt] (b2);
        \path (b1) edge[bend left=10, "1", pos=0.63, inner sep=2pt] (a1);
        \path (b2) edge[bend left=10, "1", pos=0.63, inner sep=2pt] (a2);

        \charge[bid]{a1}{}
        \keep{a1}{2}
        \discharge[bid]{a2}{}
        \charge[bid]{b1}{}
        \keep{b1}{1}
        \discharge[bid]{b2}{}
        
        \boxes{2}{1/,2/}
    \end{tikzpicture}
    \caption{An instance of the charging game with two time steps. In this case, $H$ has two (unnamed) vertices connected by two edges. The supplies and demands are given inside the vertices. There are two battery agents $a$ and $b$. The vertical edges between the time steps are their battery edges and the network and battery capacities induced by $\kappa$ are given next to the edges.
    }
    \label{fig:structure}
\end{figure}
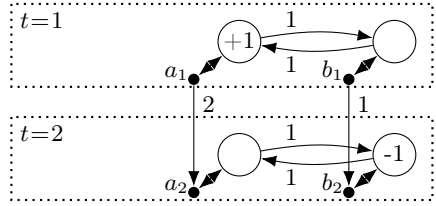%

\paragraph{\textbf{Strategies.}}
For each time step $t\in\T$, each agent decides on a positive or negative value $s_{b,t} \in \R$, i.e., how much to charge or discharge the battery, respectively.
Therefore, the strategy of an agent is a vector $s_b \in \R^{[T]}$. We denote a strategy profile by $\s \in \R^{B\times \T}$.
Thus, the agents precommit to a strategy for the whole game.
As a battery cannot charge above capacity or discharge below $0$, we limit $s_b$ to strategies that for each $t \in [T-1]$ satisfy
\[
0 \leq \sum_{z=1}^t s_{b,z} \leq \kappa((b_z, b_{z+1}))\text.
\]

\paragraph{\textbf{Electricity Flow.}}

For a strategy profile $\s$, we define the capacity profile $\kappa_\s$ on the graph $G$, by limiting the capacity of the transaction edges to the amount of electricity charged or discharged.
That is for a positive value $s_{b,t}$, i.e., charging, the capacity of the edge $(v^b_t, b_t)$ is $s_{b,t}$ and the reverse edge capacity $0$.
Conversely, for a negative value $s_{b,t}$, i.e., discharging, the capacity of the edge $(b_t, v^b_t)$ is $|s_{b,t}|$ and the reverse edge capacity $0$.
For all remaining edges~$e$, $\kappa_\s(e) = \kappa(e)$.
We refer by $G_\s$ to the graph~$G$ with the capacities induced by $\kappa_\s$.
With this at hand, we model the flow of electricity as a \emph{feasible} flow, ignoring transmission and conversion losses and the actual physics of power flows.
For ease of notation, we add a source node $x$ and a target node~$y$. The source $x$ is connected to each node $v$ with an edge $(x,v)$ with capacity $\kappa((x,v)) = \max(0, d(v))$. Similarly, for target node $y$, there are edges $(v,y)$ with capacity $\kappa((v,y)) = \max(0, -d(v))$.

Flow $f$ on the graph $G_\s$ assigns a value $f(e) \in \R_{\geq 0}$ to each edge $e$.
This flow has to satisfy the usual flow constraints:
\begin{enumerate}[(1)]
    \item \emph{Capacity Constraint}:
    $0 \leq f(e) \leq \kappa(e)$.
    \item \emph{Flow Conservation}:
    for each node $v \notin \{x, y\}$ we have
    \[
    \sum_{(v',v)\in E}f((v',v)) = \sum_{(v,v')\in E}f((v,v'))\text.
    \]
\end{enumerate}
The value $|f|$ of a flow is the amount of outgoing flow from $x$, i.e., \[
\sum_{v \in V}{f((x,v))} = \sum_{v \in V}{f((v, y))}\text.
\]
A flow in graph $G_\s$ is a maximum flow if it has the highest possible flow value $|f|$ of all flows in~$G_\s$.

\paragraph{\textbf{Admissible Strategies.}}
In our model, we only allow the charging of batteries that is possible without violating grid constraints or exceeding the supply. Likewise, electricity can only be discharged if it satisfies a demand.
Strategies that violate these constraints are prevented, e.g., by contractual penalties.

This is easily captured in our model: If for a graph $G_\s$, there is a maximum flow that does not fully utilize a transaction edge, then this transaction edge is unnecessary and causes either oversupply or overconsumption. Therefore, we call such an edge, and a strategy or strategy profile with such an edge, \emph{inadmissible}. Conversely, the absence of such an edge makes those admissible.

\begin{definition}[Admissibility]
    Let $\s$ be a strategy profile.
    A transaction edge $e \in C$ is called \emph{admissible} for $\s$ if every maximum flow on $G_\s$ saturates $e$.
    A strategy $s_b$ is called \emph{admissible} for $\s$ if every transaction edge $e \in \{ (b_t,b_{t+1}) \mid t \in [T-1]\}$ of agent $b$ is admissible and a strategy profile $\s$ is called \emph{admissible} if each of its strategies is admissible.
\end{definition}

Using maximum flows as the set of acceptable flows is based on the assumption that the operator wants to maximize the usage of excess energy that otherwise would be wasted.

\paragraph{\textbf{Prices and Payoffs.}}
The Stackelberg leader, e.g., the operator, decides on an electricity price profile $\p \in \R^{[T]}$ consisting of prices $p_t$ for each time step $t$, which holds for both buying and selling.

Therefore, the payoff an agent receives in case of admissibility is the sum of payments for buying and selling electricity of all time steps, i.e., $-\sum_{t\in\T}s_{b,t}p_t$.
Note that there is no discount on locally produced electricity.
However, agents with an inadmissible strategy are penalized which we model by a utility of $-\infty$ for the sake of simplicity.
Therefore, the utility function for battery agent $b$ is
\[
    u_b(\s, \p) \coloneq \begin{cases}
        -\infty,& s_b \text{ is not admissible,} \\
        -\sum_{t\in\T}s_{b,t}p_t, &\text{else.}
    \end{cases}
\]

\paragraph{\textbf{Equilibria.}}
In this work, we consider pure Nash equilibria and $k$-strong Nash equilibria.
For a given price profile $\p$, a strategy profile $\s$ is a Nash equilibrium\footnote{From now on, we omit \emph{pure} but still refer to pure Nash equilibria.} if for no agent $b$, there exists an alternative strategy $s'_b$ such that $u_b((s'_b, \s_{-b}), \p) > u_b(\s, \p)$ where $\s_{-b}$ is the strategy profile of all agents except $b$.

A strategy profile $\s$ is a $k$-strong equilibrium if there is no coalition $C$ of $k$ agents with a partial strategy profile $\s'_C \in \R^C$ such that $u_b((\s'_C, \s_{-C}), \p) > u_b(\s, \p)$ for all $b \in C$ where $\s_{-C} \in \R^{B \setminus C}$ is the strategy profile of the set of battery agents outside of $C$.

\paragraph{\textbf{Efficiency and Social Welfare.}}
Since the system objective is the utilization of excess energy, we measure the welfare of a strategy profile $\s$ as the value of the maximal flow:
\[
W(\s)= \max_{\text{feasible flow } f \text{ in } G_\s} |f|.
\]

As we want to quantify the (in-)efficiency of equilibria in comparison to an optimal solution of a central planner, we employ the help of the standard ratios known as the price of anarchy and price of stability. As we later seek to study certain classes of price profiles, we define them for a given set $P$ of price profiles.
For an instance $G$, let $\textup{OPT}(G)$ be the strategy profile with the highest welfare.
For an instance $G$ and a price function~$p$, let $\textup{bestNE}(G,p)$ and $\textup{worstNE}(G,p)$ be the flow maximizing and minimizing Nash equilibria, respectively.
Hence, the price of stability and anarchy for a class of price profiles $P$ are
\[
\textup{PoA}_P \coloneq\sup_{G} \inf_{p \in P}\frac{W(\textup{OPT}(G))}{W(\textup{worstNE}(G,p))}
\ \ \text{and}\ \ 
\textup{PoS}_P \coloneq\sup_{G} \inf_{p \in P} \frac{W(\textup{OPT}(G))}{W(\textup{bestNE}(G,p))}\text.
\]
Note that this implicitly allows the prices to depend on the instance $G$.
We also consider the $k$-strong variants of the price of anarchy PoA$_P^{k\text{-strong}}$ and stability PoS$_P^{k\text{-strong}}$. In this case, the ratios refer to the best and worst $k$-strong equilibria.

\paragraph{\textbf{Price Profiles.}} We define some special classes of price profiles:
\begin{description}
    \item{\emph{any} (ANY):}
        The set of all price profiles $p$
    \item{\emph{ascending} (ASC):}
        $p$ is ascending in time for all $G$.
    \item{\emph{descending} (DESC):}
        $p$ is descending in time for all $G$.
    \item{\emph{uniform} (UNI):}
        $p$ is identical for all time steps.
    \item{\emph{supply-sign} (SIGN):}
        for each $t \in \T$: $\sgn(p_t) = -\sgn\Bigl(\sum\limits_{v_t\in V_t}d(v_t)\Bigr)$
\end{description}
That means that supply-sign functions have negative prices if there is too much electricity in a time step and positive prices if there is not enough.

\section{Preliminary Observations}
\label{sec:strategies}

Given a strategy profile $\s$ and the corresponding graph $G_\s$, it is possible that not all transactions of the agents, i.e., all planned charging and discharging operations, can be fulfilled.
If a strategy profile is inadmissible, i.e., some transaction edge in $G_\s$ is not saturated in every maximum flow, then an individual transaction might not be fulfilled depending on the specific flow.
We discourage such ambiguities resulting from an agent buying electricity that is already in use or selling it when there is an oversupply.
However, in such a scenario, it is not clear who is at fault because, in some cases, multiple agents can change their strategies to get back to an admissible strategy profile.
Our definitions of utility and admissibility are designed to punish the set of agents responsible for this because a transaction of them is not necessary for the maximum flow.
We give an example of this in \Cref{fig:flow-example}, which also shows strategies that augment flow.

\begin{figure}[h]

\def\toffx{2.04}
\def\toffy{1.6}
\def\toffup{0.55}
\def\toffdown{0.66}
\def\toffleft{1.9}
\def\toffright{0.5}
    \begin{subfigure}[t]{0.48\columnwidth}
        \centering
        \resizebox{\textwidth}{!}{%
        \begin{tikzpicture}
            \agent{a}{1}{+1}
            \agent{b}{1}{}
            \agent{a}{2}{}
            \agent{b}{2}{-1}
            \agent{c}{1}{+1}
            \agent{c}{2}{-1}
            \path (a1) edge[bend left =10, "1", pos=0.37, inner sep = 2pt] (b1);
            \path (a2) edge[bend left =10, "1", pos=0.37, inner sep = 2pt] (b2);
            \path (b1) edge[bend left =10, "1", pos=0.63, inner sep = 2pt] (a1);
            \path (b2) edge[bend left =10, "1", pos=0.63, inner sep = 2pt] (a2);
            
            \keep{a1}{1}
            \keep{b1}{1}

            \charge{c1}{1}
            \keep{c1}{1}
            \discharge{c2}{1}
            
            \boxes{3}{1/1,2/2}
        \end{tikzpicture}}
        \caption{The graph $G_\s$ for strategies $s_a=s_b=(0,0)$ and $s_c=(+1,-1)$. The profile is admissible, since there are no inadmissible transaction edges.\\ }
    \end{subfigure}
    \hfill
    \begin{subfigure}[t]{0.48\columnwidth}
        \centering
        \resizebox{\textwidth}{!}{%
        \begin{tikzpicture}
            \agent{a}{1}{+1}
            \agent{b}{1}{}
            \agent{a}{2}{}
            \agent{b}{2}{-1}
            \agent{c}{1}{+1}
            \agent{c}{2}{-1}
            \path (a1) edge[bend left=10, "1", pos=0.37, inner sep = 2pt] (b1);
            \path (a2) edge[bend left=10, "1", pos=0.37, inner sep = 2pt, cblue, dashed] (b2);
            \path (b1) edge[bend left=10, "1", pos=0.63, inner sep = 2pt] (a1);
            \path (b2) edge[bend left=10, "1", pos=0.63, inner sep = 2pt] (a2);
            
            \keep{b1}{1}

            \charge[cblue, dashed]{a1}{1}
            \keep[cblue, dashed]{a1}{1}
            \discharge[cblue, dashed]{a2}{1}
            
            \charge[cblue, dashed]{c1}{1}
            \keep[cblue, dashed]{c1}{1}
            \discharge[cblue, dashed]{c2}{1}
            
            \boxes{3}{1/1,2/2}
        \end{tikzpicture}}
        \caption{The graph $G_\s$ if agent $a$ changes her strategy to $s_a=(1,-1)$ with the only maximum flow indicated in dashed blue. All transaction edges are admissible.}
    \end{subfigure}
    \begin{subfigure}[t]{0.48\columnwidth}
        \centering
        \resizebox{\textwidth}{!}{%
        \begin{tikzpicture}
            \agent{a}{1}{+1}
            \agent{b}{1}{}
            \agent{a}{2}{}
            \agent{b}{2}{-1}
            \agent{c}{1}{+1}
            \agent{c}{2}{-1}
            \path (a1) edge[bend left=10, "1", pos=0.37, inner sep = 2pt, cred, ultra thick] (b1);
            \path (a2) edge[bend left=10, "1", pos=0.37, inner sep = 2pt, cblue, dashed] (b2);
            \path (b1) edge[bend left=10, "1", pos=0.63, inner sep = 2pt] (a1);
            \path (b2) edge[bend left=10, "1", pos=0.63, inner sep = 2pt] (a2);

            \charge[cblue, dashed]{a1}{1}
            \keep[cblue, dashed]{a1}{1}
            \discharge[cblue, dashed]{a2}{1}
            
            \charge[cred, ultra thick]{b1}{1}
            \keep[cred, ultra thick]{b1}{1}
            \discharge[cred, ultra thick]{b2}{1}
            
            \charge{c1}{1}
            \keep{c1}{1}
            \discharge{c2}{1}
            
            \boxes{3}{1/1,2/2}
        \end{tikzpicture}}
        \caption{The graph $G_\s$ if agent $b$ also plays $s_b=(1,-1)$. While every maximum flow uses the edges of $c$, giving her utility $1$, we now have multiple distinct maximum flows on the left side.
        Each particle could either use the thick red or the dashed blue path, so the strategies of agents $a$ and $b$ are inadmissible, leading to utility $-\infty$.}
        \label{fig:example-unsatisfiable}
    \end{subfigure}
    \hfill
    \begin{subfigure}[t]{0.48\columnwidth}
        \centering
        \resizebox{\textwidth}{!}{%
        \begin{tikzpicture}
            \agent{a}{1}{+1}
            \agent{b}{1}{}
            \agent{a}{2}{}
            \agent{b}{2}{-1}
            \agent{c}{1}{+1}
            \agent{c}{2}{-1}
            \path (a1) edge[bend left=10, "1", pos=0.37, inner sep = 2pt, cred, ultra thick] (b1);
            \path (a2) edge[bend left=10, "1", pos=0.37, inner sep = 2pt] (b2);
            \path (b1) edge[bend left=10, "1", pos=0.63, inner sep = 2pt] (a1);
            \path (b2) edge[bend left=10, "1", pos=0.63, inner sep = 2pt] (a2);
            
            \keep{a1}{1}
            
            \charge[cred, ultra thick]{b1}{1}
            \keep[cred, ultra thick]{b1}{1}
            \discharge[cred, ultra thick]{b2}{1}
            
            \charge[cred, ultra thick]{c1}{1}
            \keep[cred, ultra thick]{c1}{1}
            \discharge[cred, ultra thick]{c2}{1}
            
            \boxes{3}{1/1,2/2}
        \end{tikzpicture}}
        \caption{The graph $G_\s$ for a second Nash equilibrium with agent $b$ playing $s_b=(1,-1)$, but this time $s_a=(0,0)$. With only one maximum flow (thick red), this profile is admissible and all agents receive non-negative utility.}
    \end{subfigure}
    \caption{For a graph $G$, the subfigures contain the graphs $G_\s$ for several strategy profiles $\s$ and the corresponding maximum flows.}
    \label{fig:flow-example}
\end{figure}
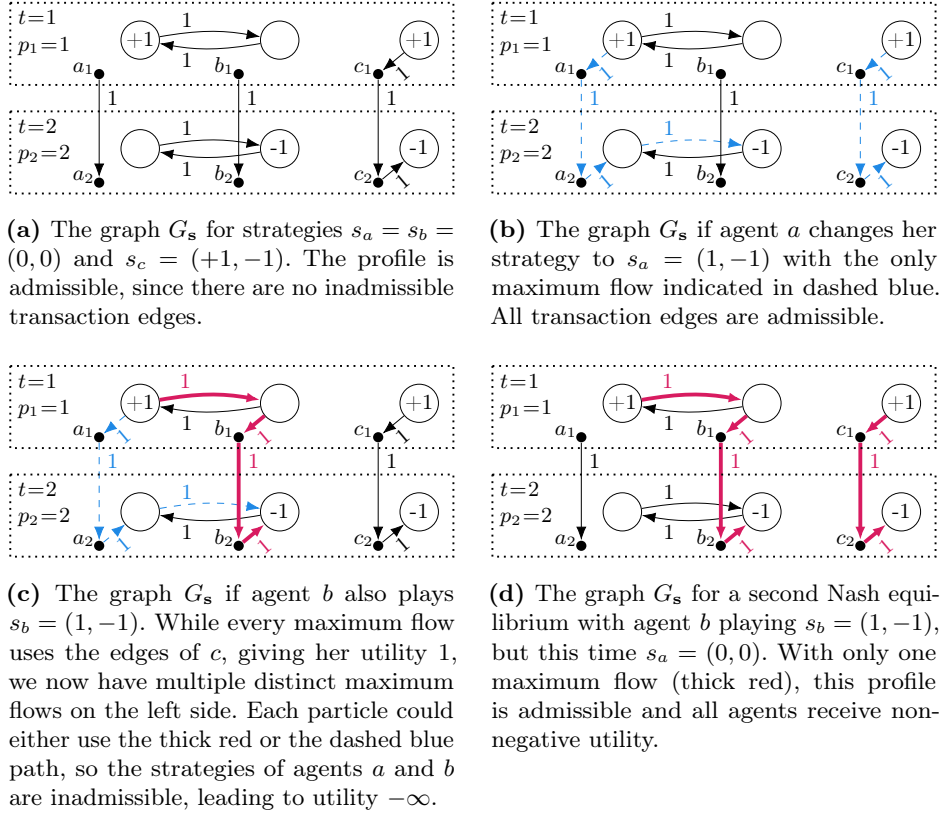

In \Cref{fig:example-unsatisfiable}, we show why not all agents are punished in an inadmissible strategy profile:
Agent $c$ cannot fix the inadmissibility of the profile by changing strategies and in fact, any change in her strategy wastes even more energy.
Which agents exactly have admissible strategies is computable in polynomial time.

\begin{theorem}[Utilities in Polynomial Time]
   For a given strategy and price profile, admissibility and the agents' utilities can be computed in polynomial time.
\end{theorem}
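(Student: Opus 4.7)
The statement has two parts: deciding admissibility and then computing utilities. Once admissibility of each strategy is determined, the utility formula $u_b = -\sum_{t}s_{b,t}p_t$ (or $-\infty$) is a sum of at most $T$ terms. So the plan reduces to showing that admissibility can be decided in polynomial time.

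By definition, a strategy $s_b$ is admissible iff every transaction edge of $b$ is saturated by every maximum flow in $G_\s$. I would use the following standard residual-graph characterization: for a transaction edge $e=(u,v)$ of capacity $c=\kappa_\s(e)$, and for any fixed maximum flow $f$ on $G_\s$, the edge $e$ is saturated in every maximum flow iff $f(e)=c$ \emph{and} there is no directed path from $u$ to~$v$ in the residual graph $G_f$. The ``only if'' direction: if such a path $P$ exists in $G_f$, then closing $P$ with the reverse residual edge $(v,u)$ (which has residual capacity $c>0$) yields a cycle in $G_f$ along which a positive amount of flow can be pushed, producing another maximum flow of strictly smaller value on~$e$. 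The ``if'' direction: if some maximum flow $f'$ has $f'(e)<c$, then the difference $f'-f$ is representable as a circulation on $G_f$ whose component on the reverse edge $(v,u)$ is strictly positive, and any flow decomposition of this circulation into cycles contains a cycle through $(v,u)$, whose remainder is a $u\to v$ path in $G_f$.

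With this at hand, the algorithm is straightforward: (i) compute one maximum flow $f$ on $G_\s$ using any polynomial max-flow algorithm; (ii) build the residual graph $G_f$; (iii) for every transaction edge $e=(u,v)$, check $f(e)=\kappa_\s(e)$ and run a single reachability query from $u$ to $v$ in $G_f$, declaring $e$ admissible iff both checks succeed; (iv) declare a strategy $s_b$ admissible iff all of $b$'s transaction edges are; (v) output the utilities from the closed-form expression. The maximum flow and each reachability query are polynomial in the size of $G_\s$, and there are at most $|B|\cdot T$ transaction edges to inspect.

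The only subtlety is that the definition of admissibility quantifies over \emph{every} maximum flow, while the algorithm computes just one; this is exactly what the residual-graph characterization resolves, and I expect it to be the only nontrivial ingredient of the proof.
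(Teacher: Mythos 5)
Your proposal is correct, but it takes a genuinely different route from the paper. The paper's proof decides admissibility of each transaction edge by perturbing the instance: lower that edge's capacity by some $\epsilon>0$, recompute a maximum flow, and declare the edge admissible exactly if the maximum flow value drops; utilities are then read off as you do. You instead compute a single maximum flow $f$ on $G_\s$ and use the standard residual-graph characterization (edge $e=(u,v)$ is saturated by every maximum flow iff $f(e)=\kappa_\s(e)$ and there is no $u\to v$ path in $G_f$), proved via the cycle-augmentation and flow-difference-decomposition arguments you sketch, both of which are sound. Your approach buys two things: it needs only one max-flow computation plus a reachability query per transaction edge, rather than a fresh max-flow computation per edge, and it avoids the choice of $\epsilon$ altogether --- with real-valued capacities the paper's test is only valid if $\epsilon$ is smaller than the minimum slack of $e$ over all maximum flows, a point the paper glosses over, whereas your characterization is exact. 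The paper's approach, in turn, is shorter to state and needs no lemma beyond max-flow itself. One tiny caveat in yours: a transaction edge of capacity $0$ (i.e.\ $s_{b,t}=0$) is trivially saturated and hence admissible, but your ``only if'' argument needs $c>0$ to push flow along the closing reverse arc; such edges should simply be skipped (declared admissible), which is an immediate fix and not a gap.
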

\begin{proof}
    The admissibility of a transaction edge $e$ can be computed by lowering its capacity by $\epsilon>0$ and computing a maximum flow for the new instance.
    If the maximum flow is still the same, the edge is not admissible because there is an alternative flow that does not saturate $e$.
    The agents' utilities can be trivially obtained in polynomial time from the computed maximum flow.
\end{proof}

Each agent can play the strategy $(0)_{t\in\T}$ to get a utility of $0$. This means that the agent does not buy any electricity and therefore none of her battery edges are inadmissible. Thus, no agent can have negative utility in any equilibrium.

\begin{observation}[Non-Negative Utilities]
    \label{obs:posutil}
    In any Nash equilibrium, agents have non-negative utility.
\end{observation}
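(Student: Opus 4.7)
The plan is to show that for every agent $b$, the all-zero strategy $s_b = (0,\ldots,0)$ is always available as a deviation and yields utility exactly $0$. Then non-negativity of equilibrium utility follows immediately from the Nash condition.

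First, I would verify that $s_b = (0,\ldots,0)$ is admissible regardless of the strategies chosen by the other agents. Under this strategy, every transaction edge of agent $b$, namely the edges $(v^b_t, b_t)$ and $(b_t, v^b_t)$, has induced capacity $0$ in $G_{\s}$. A zero-capacity edge is trivially saturated by every feasible flow, since its flow value must equal $0$. Hence each such edge is admissible in the sense of the admissibility definition, no matter which maximum flow is chosen in $G_{\s}$, and regardless of $\s_{-b}$. This means the deviation is always available to agent $b$ and cannot incur the $-\infty$ penalty.

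Next, I would compute the payoff of this deviation. Since $s_{b,t} = 0$ for all $t \in \T$, the formula for $u_b$ gives $-\sum_{t\in\T} 0 \cdot p_t = 0$. By the Nash equilibrium condition, no deviation of agent $b$ can strictly improve her utility, so her equilibrium utility must be at least $0$. As the argument applies to every agent, the observation follows.

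I do not expect any genuine obstacle here; the only mild subtlety is to observe that admissibility of $s_b = \mathbf{0}$ is independent of the other agents' strategies, which is what makes it a safe unilateral deviation.
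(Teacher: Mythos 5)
Your proposal is correct and matches the paper's own (very brief) argument: the all-zero strategy is always admissible since its zero-capacity transaction edges are trivially saturated, it yields utility $0$, and the Nash condition then forces non-negative equilibrium utilities. You simply spell out the admissibility check that the paper leaves implicit.
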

This implies that every Nash equilibrium is an admissible strategy profile because in an inadmissible profile, at least one agent receives utility $-\infty$.

\section{Two Time Steps}

As a warm-up, we investigate the case of two time steps.
This is a relatively simple case because there are only three price functions to look at:
Ascending, descending and equal prices over the two steps.
The absolute difference does not matter, because it only scales the utilities.
Of these, only ascending prices are interesting because for descending prices, the only equilibrium is for every agent to do nothing.
For equal prices, every state is an equilibrium.
Having said that, we show that even for ascending prices, there always exists an equilibrium.

\begin{theorem}[$T=2$ Equilibrium Existence]
    For an instance $G$ with $T=2$ time steps, there exists a Nash equilibrium for every price profile. \label{thm:existence_T=2}
\end{theorem}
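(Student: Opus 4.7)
The three price regimes for $T=2$ are $p_1<p_2$, $p_1>p_2$, and $p_1=p_2$, and the paragraph preceding the statement already settles the latter two: for descending prices the utility of any admissible strategy is $s_{b,1}(p_2-p_1)\le 0$, so $\s\equiv 0$ is a Nash equilibrium; for uniform prices every admissible profile gives utility zero and is therefore an equilibrium. The remaining task is to produce an equilibrium when $p_1<p_2$, which I plan to do by inducing one from a carefully chosen maximum flow.

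Let $G^+$ denote the graph $G$ with every transaction edge made uncapacitated. Among all maximum flows on $G^+$, let $f^*$ be one that minimizes the total battery throughput $\sum_{b} f^*((b_1,b_2))$; without loss of generality I take $f^*$ to be free of local cycles at each battery node, so only the charge direction of each transaction-edge pair carries flow. Define $\s^*$ by $s^*_{b,1}\coloneqq f^*((v^b_1,b_1))$ and $s^*_{b,2}\coloneqq -f^*((b_2,v^b_2))$; flow conservation at $b_1$ and $b_2$ forces $s^*_{b,1}=-s^*_{b,2}\in[0,\kappa((b_1,b_2))]$, so $\s^*$ satisfies the strategy constraints.

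For admissibility, let $g$ be any maximum flow on $G_{\s^*}$. One has $|g|=|f^*|$ by the sandwich that $f^*$ remains feasible on $G_{\s^*}$ (so $|g|\ge|f^*|$) while $G_{\s^*}$ is no more permissive than $G^+$ (so $|g|\le|f^*|$). Since the reverse transaction edges have capacity zero in $G_{\s^*}$, conservation at each battery node gives $g((b_1,b_2))=g((v^b_1,b_1))\le s^*_{b,1}$; a strict inequality anywhere would make $g$ a maximum flow on $G^+$ with strictly smaller battery throughput than $f^*$, contradicting minimality. Hence every transaction edge is saturated and $\s^*$ is admissible. For the Nash property, admissibility forces $s_{b,1}=-s_{b,2}$, so the utility $s_{b,1}(p_2-p_1)$ is strictly increasing in $s_{b,1}$, and deviations with $s'_{b,1}\le s^*_{b,1}$ cannot strictly improve utility. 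For a deviation with $s'_{b,1}>s^*_{b,1}$, the same sandwich applied to $G_{(s'_b,\s^*_{-b})}$ shows that $f^*$ is still a maximum flow there, yet $f^*$ leaves $b$'s enlarged transaction edge unsaturated, so $s'_b$ is inadmissible and yields utility $-\infty$.

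The main obstacle I anticipate is the admissibility step: a naive choice of maximum flow can easily induce an inadmissible $\s^*$, because routing flow through a battery when a direct path in the same time layer could carry it instead renders that transaction edge non-critical. The minimization of total battery throughput is the crucial selection rule; it ensures that every used transaction edge is critical for the maximum flow in $G_{\s^*}$, hence saturated in every maximum flow there, which is exactly the admissibility requirement.
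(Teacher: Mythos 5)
Your argument is correct and follows the same overall route as the paper's proof: dispose of the descending and uniform cases directly, and for $p_1<p_2$ read an equilibrium off a maximum flow by setting each agent's strategy to $(f_e,-f_e)$ on her battery edge, then note that charging less only forfeits the positive margin $p_2-p_1$ while charging more is inadmissible because the chosen maximum flow already witnesses the enlarged transaction edge as unnecessary. The genuine difference is your selection rule: among all maximum flows of the uncapacitated graph you pick one minimizing total battery throughput (and free of local transaction cycles), and this is exactly what makes the admissibility step airtight via your sandwich argument, since any maximum flow of $G_{\s^*}$ is then also maximum in the uncapacitated graph and an unsaturated transaction edge would contradict throughput minimality. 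The paper instead takes an arbitrary maximum flow and justifies admissibility by asserting that the battery edges form a cut whose size equals the maximum flow value, which only holds when all flow must cross the time boundary; with an arbitrary maximum flow the induced profile can in fact be inadmissible (one unit of supply at time~1 that can serve either a time-1 demand directly or a time-2 demand through a battery yields a battery-using maximum flow whose induced profile is not admissible, whereas your throughput-minimal flow selects the direct routing). So your extra selection lemma buys a cleaner and more robust admissibility argument than the paper's terse cut claim, at the cost of a slightly longer proof. One small point to make explicit: when ruling out deviations with $s'_{b,1}>s^*_{b,1}$, your claim that $f^*$ remains feasible in the deviated graph uses that the deviation has the form $(x,-x)$; this is harmless because, as you yourself observe, all other shapes are inadmissible anyway, but the case split should be stated.
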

\begin{proof}
    If $p_1 > p_2$, i.e., the price decreases, the only equilibrium is the strategy profile where no agent buys or sells electricity.
    For $p_1 = p_2$, all admissible strategy profiles are Nash equilibria, since the agents are indifferent between strategies.

    For the case of $p_1 < p_2$, we show that the maximum flow $f$ in $G$ yields a Nash equilibrium $\s$:
    For any battery edge $e = (b_1,b_2)$, we set $s_{b} = (f_e, -f_e)$.
    This is admissible, since the battery edges in $\kappa_\s$ form a cut whose size equals the value of the maximum flow.
    Thus, and because of the increasing price for an agent $b$, buying less electricity decreases the utility.
    An agent $b$ buying more electricity causes the edge $b_t$ to be inadmissible because $f$ is a maximum flow that does not fully utilize the transaction edges of agent~$b$.
\end{proof}

As a corollary, the price of stability is $1$ for ascending and uniform price functions. In contrast, the price of stability is $\infty$ for descending prices.
\begin{theorem}[$T=2$ Price of Stability] \label{thm:PoS_T=2}
    For $T=2$ time steps, the prices of stability are $PoS_\textup{UNI}=PoS_\textup{ASC}=1$ and $PoS_\textup{DESC}=\infty$.
\end{theorem}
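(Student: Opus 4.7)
The plan is to dispatch each of the three price classes separately, reusing the equilibrium construction from \Cref{thm:existence_T=2} for the upper bounds and giving a minimal instance for the descending case.

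For the ASC case, I would argue that the Nash equilibrium $\s$ built in the proof of \Cref{thm:existence_T=2} already has optimal welfare. Given a maximum flow $f$ in $G$ (with transaction edges unconstrained), the proof sets $s_b = (f_e, -f_e)$ where $f_e$ is the flow on agent $b$'s battery edge; the battery edges then form a cut in $G_\s$ whose capacity equals $|f|$, so $W(\s) = |f|$. Since no strategy profile can possibly route more flow than $|f|$, we have $W(\textup{OPT}(G)) = |f| = W(\s)$, which gives $\textup{PoS}_\textup{ASC} \le 1$; combined with the trivial lower bound $\textup{PoS} \ge 1$ we get equality. For UNI, recall from the proof of \Cref{thm:existence_T=2} that every admissible strategy profile is a Nash equilibrium, so the very same $\s$ witnesses $\textup{PoS}_\textup{UNI} = 1$.

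For DESC, I would give a tiny instance on which the unique equilibrium has welfare $0$ while the social optimum has positive welfare. Take $H$ with a single vertex $v$ and no edges, two time steps with $d(v_1) = 1$ and $d(v_2) = -1$, and one battery agent $b$ at $v$ (with battery edge capacity at least $1$). By the proof of \Cref{thm:existence_T=2}, the only equilibrium for a descending price profile is the all-zero profile; in that profile there is no path from $v_1$ to $v_2$, so $W(\textup{bestNE}) = 0$. In contrast, the profile $s_b = (1, -1)$ is admissible with unique maximum flow $x \to v_1 \to b_1 \to b_2 \to v_2 \to y$, giving $W(\textup{OPT}) \ge 1$. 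Hence the ratio is $\infty$.

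The only subtle point is justifying that $W(\textup{OPT}(G))$ in the ASC argument is bounded by the maximum flow value $|f|$ in the underlying graph with unconstrained transaction edges; this holds because for any strategy profile $\s$ the graph $G_\s$ is obtained from $G$ by \emph{tightening} transaction edge capacities, so any feasible flow in $G_\s$ is also feasible in the unconstrained graph. Everything else is a direct appeal to \Cref{thm:existence_T=2}.
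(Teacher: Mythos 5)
Your proposal is correct and follows essentially the same route as the paper: for UNI and ASC it reuses the equilibrium constructed in the proof of \Cref{thm:existence_T=2} and observes (via the capacity-tightening argument) that its welfare equals the maximum flow in $G$, hence the optimum; for DESC it exhibits an instance whose unique equilibrium under strictly decreasing prices is the all-zero profile with welfare $0$ against an optimum of $1$. The only cosmetic difference is that you use a minimal single-vertex instance for the descending case, whereas the paper points to the instance of \Cref{fig:structure}.
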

\begin{proof}
$PoS_\textup{UNI}=PoS_\textup{ASC}=1$ follows from the proof of \Cref{thm:existence_T=2}, since all employed battery edges of the optimum flow are profitable (or at least neutral) for the involved agents. Thus, they cannot improve their utility by not charging their battery. Also, since the flow has maximum value, these agents also cannot change to some other strategy, as this would yield inadmissible states.

For descending prices, the situation is different. If we consider the instance depicted in \Cref{fig:structure}, then no agent is willing to charge her battery since a later sale would result in a negative utility. Thus, the flow value is $0$. In contrast, the optimum flow would utilize one of the batteries, achieving a flow value of $1$.
\end{proof}
However, as we see next, worse Nash equilibria exist. In fact, we show that the PoA is at least $2$ for any price function.
\begin{theorem}[$T=2$ Price of Anarchy]
\label{thm:T2PoAlb}
    For $T=2$ time steps, the price of anarchy is $PoA_p \geq 2$ for any price profile $p$.
\end{theorem}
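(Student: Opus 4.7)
The plan is to split the argument into three sub-cases according to whether $p_1 > p_2$, $p_1 = p_2$, or $p_1 < p_2$, and in each case to exhibit an instance achieving a ratio of at least $2$.

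For the descending case $p_1 > p_2$ and the uniform case $p_1 = p_2$, I would reuse the instance of \Cref{fig:structure}. In the descending case, the proof of \Cref{thm:existence_T=2} shows that the only Nash equilibrium is the all-zero profile, so $W(\textup{worstNE}) = 0$. In the uniform case, that same proof shows every admissible profile is a NE, and the all-zero profile is trivially admissible, again giving $W(\textup{worstNE}) = 0$. Since $W(\textup{OPT}) = 1$ on that instance, the ratio $W(\textup{OPT})/W(\textup{worstNE})$ is already infinite in both sub-cases, so $\textup{PoA}_p = \infty \geq 2$.

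The ascending case $p_1 < p_2$ is where the real work lies. Here I would construct a specific gadget in which the optimum has value $W(\textup{OPT}) = 2k$, with charging spread across all agents so that the induced max flow is unique and every transaction edge saturated, while there also exists a competing Nash equilibrium of value only $k$ in which half of the agents charge and the rest do not. For such a bad profile to be stable, each currently inactive agent's unilateral increase must be blocked by admissibility: the modified $G_\s$ must admit multiple max flows, so that the deviator's new transaction edge fails to be saturated in every one of them. Symmetrically, no currently active agent may decrease, but this is automatic for $p_1<p_2$ since charging yields strictly positive utility. The hard part will be engineering the graph and capacity pattern so that these two opposing conditions (unique saturation at the optimum profile, and forced non-uniqueness along every single-agent deviation from the bad NE) coexist on a single instance; once such a gadget is fixed, verifying the ratio reduces to computing max flows on the two explicit profiles and reading off that it approaches $2$.
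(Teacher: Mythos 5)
Your handling of the cases $p_1 > p_2$ and $p_1 = p_2$ is fine and matches the paper's reasoning (an all-zero equilibrium of welfare $0$ against an optimum of $1$ gives an infinite ratio), but the ascending case $p_1 < p_2$ --- which is the entire substance of the theorem --- is left as a gap. You correctly identify the mechanism that must make a bad profile stable (unilateral increases must be blocked by admissibility, i.e.\ the deviator's transaction edge must fail to be saturated in some maximum flow, while decreases are ruled out by the price increase), but you never actually exhibit an instance with this property; you explicitly defer ``engineering the graph and capacity pattern.'' Since the claim is a lower bound, the construction \emph{is} the proof, and a description of what the gadget should achieve does not establish that such a gadget exists.

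For comparison, the paper's construction (\Cref{fig:t2poa}) is considerably simpler than the $2k$-versus-$k$ family you envision: a path $a,b,c,d,e$ with all grid and battery capacities $1$, supplies $+1$ at $a_1,b_1$ and demands $-1$ at $d_2,e_2$. The optimum lets agents $a$ and $e$ each store one unit, giving flow $2$. The bad equilibrium has only the middle agent $c$ play $(1,-1)$, giving flow $1$: agent $c$ will not reduce her charge under $p_1<p_2$, and any other agent who opens a transaction edge cannot route additional flow because the edges $(b_1,v^c_1)$ and $(v^c_2,d_2)$ form bottlenecks of capacity $1$ that are already saturated, so the deviator's edge is left unsaturated in a maximum flow and she receives utility $-\infty$. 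If you want to salvage your proposal, you need to supply such an explicit instance and verify both the equilibrium property and the value of the optimum; everything else in your write-up is sound.
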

\begin{proof}
Consider the electricity network $G$ shown in \Cref{fig:t2poa} where each agent has a battery capacity of $1$.
We show that any price function $p$ has $PoA_p \geq 2$.
The strategy profile where agents $a$ and $e$ store one unit of electricity each yields a flow of $2$ (dashed blue in \Cref{fig:t2poa}) and thus it is optimal.

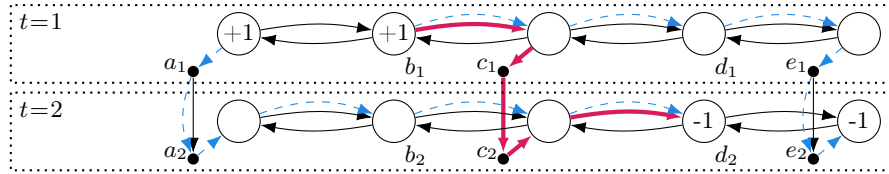
\begin{figure}[h]
\def\toffx{2.04}
\def\toffy{1.16}
\def\toffup{0.38}
\def\toffdown{0.65}
\def\toffleft{3}
\def\toffright{0.5}
    \centering
    \begin{tikzpicture}
        \agent{a}{1}{+1}
        \vertex{b}{1}{+1}
        \agent{c}{1}{}
        \vertex{d}{1}{}
        \agent{e}{1}{}
        \agent{a}{2}{}
        \vertex{b}{2}{}
        \agent{c}{2}{}
        \vertex{d}{2}{-1}
        \agent{e}{2}{-1}
        
        \path (a1) edge[bend left=10] (b1);
        \path (b1) edge[bend left=10] (a1);
        \path (b1) edge[bend left=10, ultra thick, cred] (c1);
        \path (c1) edge[bend left=10] (b1);
        \path (c1) edge[bend left=10] (d1);
        \path (d1) edge[bend left=10] (c1);
        \path (d1) edge[bend left=10] (e1);
        \path (e1) edge[bend left=10] (d1);
        
        \charge[ultra thick, cred]{c1}{}
        \keep[ultra thick, cred]{c1}{}
        \discharge[ultra thick, cred]{c2}{}
        
        \path (a2) edge[bend left=10] (b2);
        \path (b2) edge[bend left=10] (a2);
        \path (b2) edge[bend left=10] (c2);
        \path (c2) edge[bend left=10] (b2);
        \path (c2) edge[bend left=10, ultra thick, cred] (d2);
        \path (d2) edge[bend left=10] (c2);
        \path (d2) edge[bend left=10] (e2);
        \path (e2) edge[bend left=10] (d2);
        
        \charge[dashed, cblue]{a1}{}
        \keep{a1}{}
        \keep[dashed, cblue, bend right=20]{a1}{}
        \discharge[dashed, cblue]{a2}{}
        \path (a2) edge[cblue, dashed, bend left=20] (b2);
        \path (b2) edge[cblue, dashed, bend left=20] (c2);
        \path (c2) edge[cblue, dashed, bend left=20] (d2);
        
        \charge[dashed, cblue]{e1}{}
        \keep{e1}{}
        \keep[dashed, cblue, bend right=20]{e1}{}
        \discharge[dashed, cblue]{e2}{}
        \path (b1) edge[cblue, dashed, bend left=20] (c1);
        \path (c1) edge[cblue, dashed, bend left=20] (d1);
        \path (d1) edge[cblue, dashed, bend left=20] (e1);
        
        \boxes{5}{1/,2/}
    \end{tikzpicture}
    \caption{The graph $G_\s$ where only agent $c$ buys/sells electricity with capacity $1$ on all edges. The thick red edges mark the maximum flow in $G_\s$. Profile $\s$ is a Nash equilibrium with social welfare $1$. However, an alternative strategy profile enabling the dashed blue flow has a social welfare of~$2$.}
    \label{fig:t2poa}
\end{figure}

Let $p$ be an arbitrary price function. We make a case distinction and first consider the case that $p_1 \geq p_2$, i.e., the cost is not increasing.
In this case, there is an equilibrium with social welfare $0$ where no agent buys any electricity. Thus, in this case, the social welfare ratio with the optimum is infinity. For $p_1 < p_2$, we claim that strategy profile $\s$ with $s_{c}=(1,-1)$ and no other agents buying electricity is a Nash equilibrium with social welfare $1$.
Agent $c$'s only possible deviation of buying less electricity decreases utility. For any other agent $x \neq c$ buying more electricity, there is a minimum cut of size $1$ between the supplies and demands, either in the form of $(b_1, v^c_1)$ or $(v^c_2, d_2)$.
Thus, if $x$ deviates, it receives utility $-\infty$.
Therefore, the price of anarchy is at least $2$.
\end{proof}
We will later see that the lower bound for two time steps from \Cref{thm:T2PoAlb} is tight for ascending prices (see \Cref{thm:poa-t-strong} in \Cref{sec:multiple}).

\section{More Than Two Time Steps}
\label{sec:multiple}


In this section, we study more than two time steps. We show that for a given instance, there always exists a price profile that induces a Nash equilibrium.
However, there are price profiles that do not admit any pure Nash equilibrium. We also show that the price of stability is high for ascending prices, while the price of anarchy is high for all price profiles.

\subsection{Equilibrium Existence}
We start with the observation that for any given instance, there always exists a price profile that admits a Nash equilibrium. This price profile is uniform pricing. With this, each agent's payoff from charging and discharging evaluates to $0$. Thus,
any admissible strategy profile is a Nash equilibrium.
\begin{observation}[Equilibrium Existence]
    For any given instance, there always exists a price profile that admits a Nash equilibrium.
\end{observation}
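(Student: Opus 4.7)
The plan is to exhibit one price profile under which a particular strategy profile is a Nash equilibrium, and the natural choice is uniform pricing (as the authors hint). Concretely, fix any constant $c \in \R$ and set $p_t = c$ for every $t \in \T$. Then for any admissible strategy $s_b$ the utility formula collapses to
\[
u_b(\s,\p) = -c\sum_{t \in \T} s_{b,t},
\]
so the payoff of any combined \emph{charge then discharge} pattern sums telescopically to $0$. In particular, choosing $c = 0$ is the cleanest: every admissible profile yields utility $0$ to every agent, and no deviation can exceed $0$.

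Next I would exhibit a concrete admissible profile to serve as the equilibrium witness. The obvious candidate is the all-zero profile $\s^{\mathbf 0}$ with $s_{b,t} = 0$ for every $b \in B$ and every $t \in \T$. The partial-sum constraint $0 \leq \sum_{z=1}^t s_{b,z} \leq \kappa((b_z,b_{z+1}))$ is trivially met, so $\s^{\mathbf 0}$ is a valid strategy profile. For admissibility one needs that every transaction edge is saturated in every maximum flow of $G_{\s^{\mathbf 0}}$; but under $\s^{\mathbf 0}$ every transaction edge has induced capacity $0$, and an edge of capacity $0$ is saturated by the only feasible flow value on it, namely $0$. Hence $\s^{\mathbf 0}$ is admissible.

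Finally I would close the argument by checking no profitable unilateral deviation exists. With $c = 0$, the utility of every agent equals $0$ on every admissible strategy and $-\infty$ on every inadmissible one, so $u_b(\s^{\mathbf 0},\p) = 0 \geq u_b((s'_b, \s^{\mathbf 0}_{-b}),\p)$ for every $s'_b$. Thus $\s^{\mathbf 0}$ is a Nash equilibrium under the price profile $\p = \mathbf 0$, which proves the observation.

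There is no real obstacle here; the only subtlety worth stating explicitly is that saturation of a capacity-$0$ transaction edge is automatic, so the all-zero profile is genuinely admissible in the formal sense of the definition. Given the shortness, I would keep the write-up to a few lines, mirroring the author's stated intuition that under uniform pricing every charge/discharge balance nets to zero.
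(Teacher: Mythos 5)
Your proof is correct and follows essentially the same route as the paper: uniform pricing collapses every agent's payoff so that an admissible profile (the paper notes any admissible profile, you instantiate the all-zero one under price $0$) is a Nash equilibrium. The extra care you take in verifying admissibility of the all-zero profile (capacity-$0$ transaction edges are trivially saturated) is a fine, slightly more explicit version of the paper's one-line argument.
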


However, for a given price profile, equilibria might not exist.

\begin{theorem}[Equilibrium Non-Existence]
\label{thm:no-ne}
    For a given instance and price profile, the charging game does not necessarily admit a Nash equilibrium.
\end{theorem}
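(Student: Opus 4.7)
The plan is to exhibit a concrete counterexample: an instance $G$ together with a price profile $\p$ for which no pure Nash equilibrium exists. Since \Cref{thm:existence_T=2} already handles the case $T = 2$, the counterexample must have $T \geq 3$ time steps.

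First, I would construct a small instance (with $T = 3$ time steps, a host graph with one or two nodes, and two or three battery agents) in which each agent has only a handful of candidate admissible strategies, making the set of admissible profiles finite and enumerable. By \Cref{obs:posutil} every Nash equilibrium must be admissible, so it suffices to rule each admissible profile out as well as each partial-null variant.

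The intended mechanism is a cyclic best-response structure reminiscent of Matching Pennies: the prices and supplies/demands would be tuned so that the admissibility of one agent's most profitable strategy depends on the other agent's current strategy in a mutually incompatible way. Concretely, the goal is to force a pattern where agent $a$ has two top strategies $A_1, A_2$ with $A_i$ admissible only when $b$ plays a corresponding $B_i$, while $b$'s top strategies $B_1, B_2$ are admissible only when $a$ plays $A_j$ with $j \neq i$. Then, in each of the four non-null profiles, exactly one agent is blocked by admissibility and has a strictly profitable admissible deviation, producing the cycle
\[
(A_1, B_1) \to (A_1, B_2) \to (A_2, B_2) \to (A_2, B_1) \to (A_1, B_1).
\]
I would additionally rule out the all-null and partially-null profiles by ensuring that from the null profile some agent always has a profitable admissible deviation that feeds back into the cycle.

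The main obstacle is realizing this admissibility pattern inside the flow-based charging-game framework. Since admissibility of a transaction edge requires saturation in \emph{every} maximum flow of $G_\s$, engineering the host-graph capacities and supplies/demands so that alternative flow routes appear or disappear in just the right profiles is delicate. The bulk of the proof will consist of profile-by-profile verification of maximum flows, admissibility of each purported deviation, and strict improvement in utility.
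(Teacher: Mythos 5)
There is a genuine gap: your proposal never exhibits the counterexample, and the one concrete structural commitment it does make is problematic. The entire content of \Cref{thm:no-ne} is the existence of a specific instance and price profile with no Nash equilibrium; a plan that says the prices and demands ``would be tuned'' to create a Matching-Pennies-style best-response cycle, while conceding that ``the main obstacle is realizing this admissibility pattern inside the flow-based charging-game framework,'' leaves exactly the hard part undone. In particular, it is not evident that a $T=3$ instance with the four-profile cycle you describe can be realized at all (the paper's construction uses $T=4$, two agents, and carefully chosen prices $1,11,12,13$ together with a fractional demand of $-\tfrac32$), so the proposal cannot be checked or completed as stated.

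A second, more structural problem is your premise that the set of admissible profiles is ``finite and enumerable.'' Strategies are real-valued vectors: an agent can charge or discharge any fractional amount, and admissibility varies continuously with these amounts, so ruling out a Nash equilibrium requires excluding a continuum of candidate profiles, not four discrete ones plus null variants. This is precisely why the paper does not argue by enumeration: it parameterizes all putative equilibria by the partial charges/discharges $z_a,z_c,y_a,y_c$, bounds the total equilibrium utility, exhibits for each agent an explicit admissible deviation whose utility depends on the other agent's parameters, and sums the two equilibrium conditions to obtain the inequality $8(z_a+z_c)+2(y_a+y_c)\geq 11$, which is incompatible with the admissibility bounds $z_a+z_c\leq 1$ and the implication that a positive $z$ forces the corresponding $y$ to be $0$. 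If you want to pursue your cyclic-blocking idea, you would either need to prove that only finitely many strategies can occur in an equilibrium of your instance (which is unlikely to hold verbatim) or replace the enumeration by a quantitative argument of this summed-deviation type over the continuous parameter space.
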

\begin{proof}
    Consider the instance and price profile $\p$ in \Cref{fig:no-ne} with capacity $2$ on all edges and batteries.
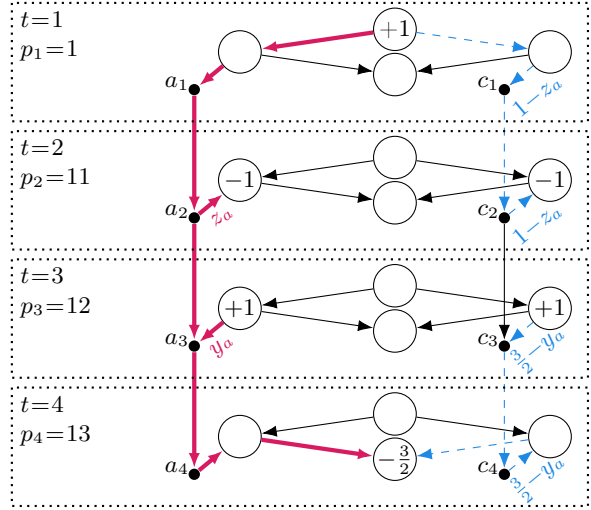
\begin{figure}[h]
\def\toffx{2.04}
\def\toffy{1.7}
\def\toffup{0.65}
\def\toffdown{0.92}
\def\toffleft{3}
\def\toffright{0.5}
    \centering
    \begin{tikzpicture}
        \agent{a}{1}{}
        \node[vert] (bo1) at (\toffx, -0*\toffy +0.3) {+1};
        \node[vert] (bi1) at (\toffx, -0*\toffy -0.3) {};
        \agent{c}{1}{}
        \agent{a}{2}{$-1$}
        \node[vert] (bo2) at (\toffx, -1*\toffy +0.3) {};
        \node[vert] (bi2) at (\toffx, -1*\toffy -0.3) {};
        \agent{c}{2}{$-1$}
        \agent{a}{3}{$+1$}
        \node[vert] (bo3) at (\toffx, -2*\toffy +0.3) {};
        \node[vert] (bi3) at (\toffx, -2*\toffy -0.3) {};
        \agent{c}{3}{$+1$}
        \agent{a}{4}{}
        \node[vert] (bo4) at (\toffx, -3*\toffy +0.3) {};
        \node[vert] (bi4) at (\toffx, -3*\toffy -0.3) {$-\frac32$};
        \agent{c}{4}{}
        
        \path (bo1) edge[cred, ultra thick] (a1);
        \path (bo1) edge[cblue, dashed] (c1);
        \path (a1) edge (bi1);
        \path (c1) edge (bi1);

        \foreach \i in {2,3} {
            \path (bo\i) edge (a\i);
            \path (bo\i) edge (c\i);
            \path (a\i) edge (bi\i);
            \path (c\i) edge (bi\i);
        }
        
        \path (bo4) edge (a4);
        \path (bo4) edge (c4);
        \path (a4) edge[cred, ultra thick] (bi4);
        \path (c4) edge[cblue, dashed] (bi4);
        
        \charge[cred, ultra thick]{a1}{}
        \keep[cred, ultra thick]{a1}{$ $}
        \discharge[cred, ultra thick]{a2}{$z_a$}
        \keep[cred, ultra thick]{a2}{$ $}
        \charge[cred, ultra thick]{a3}{$y_a$}
        \keep[cred, ultra thick]{a3}{$ $}
        \discharge[cred, ultra thick]{a4}{}
        
        \charge[cblue, dashed, inner sep=6pt]{c1}{\medmuskip=0mu$1-z_a$}
        \keep[cblue, dashed]{c1}{}
        \discharge[cblue, dashed, inner sep=6pt]{c2}{\medmuskip=0mu$1-z_a$}
        \keep{c2}{}
        \charge[cblue, dashed,pos=0.4]{c3}{\medmuskip=0mu$\frac32-y_a$}
        \keep[cblue, dashed]{c3}{}
        \discharge[cblue, dashed,pos=0.6]{c4}{\medmuskip=0mu$\frac32-y_a$}

        \boxes{3}{1/1,2/11,3/12,4/13}
        
    \end{tikzpicture}
    \caption{Graph $G$ of an instance with no Nash equilibrium, all capacities are $2$.
    In \Cref{thm:no-ne}, we consider equilibrium candidates in which $z_a$ (or $z_c$) is discharged in time step~2 by agent $a$ (or agent $c$) and $y_a$ (or $y_c$) is charged in time step~3,
    visualized with its intended flow in red for agent~$a$.
    For both agents, we consider a deviation to a strategy as depicted for agent~$c$ with the blue dashed strategy $(1-z_a, -1+z_a, \frac32-y_a, -\frac32+y_a)$.}
    \label{fig:no-ne}
\end{figure}
Assume towards a contradiction that there is a Nash equilibrium $\s$.
For strategy profile $\s$, let $z_a$ and $z_c$ denote the discharge of agents $a$ and $c$ in time step 2, respectively. Let $y_a$ and $y_c$ denote the charge of agents $a$ and $c$ in time step 3, respectively. While this does not fully characterize $\s$, these parameters are sufficient to show that $\s$ cannot exist. Observe that $z_a + z_c \le 1$ and $y_a, y_c \le 1$ as otherwise $\s$ is not admissible, contradicting \Cref{obs:posutil}.

It will be useful to bound the total utility in $\s$. To this end, observe that the total charge in time step 1 is $1$ and the total discharge in time step 4 is $\frac32$ in any Nash equilibrium. The values cannot be higher as this violates admissibility. A lower value would allow an improvement by one of the agents transporting some more flow between time steps 1 and 2 or 3 and 4, respectively.
With that, we can bound the total payoff of agents $a$ and $c$ by
\begin{align}
    u_a(\s) + u_b(\s) = 10 (z_a + z_c) + 12 (1 - z_a - z_c) +  1 (y_a + y_c). \label{ne}
\end{align}
Furthermore, if $y_a > 0$ and $z_a > 0$, agent $a$ could improve by decreasing both values by some $\epsilon$, resulting in a gain of $\epsilon$.
Thus, and with the same argument for agent $c$, we have
\begin{align}
     z_a > 0 \Rightarrow y_a = 0 \text{ and } z_c > 0 \Rightarrow y_c = 0 . \label{imply}
\end{align}
We now consider for agent $a$ the new strategy (and analogously for agent $c$)
\begin{align*}
s'_a=&\left(1- z_c, - 1 + z_c, \tfrac32 - y_c, - \tfrac32 + y_c\right)\text{ and}\\
s'_c=&\left(1- z_a, - 1 + z_a, \tfrac32 - y_a, - \tfrac32 + y_a\right)\text.
\end{align*}
We observe that the profile $(s'_a,s_c)$ yields a maximal flow of $\frac52$, which is the best possible for this network, by sending in total $1$ unit from time step 1 to 2 and $\frac{3}{2}$ units from time step 3 to 4. Thus, the strategy $s'_a$ is admissible for the deviating agent $a$ since all her transaction edges are necessary for the maximal flow. However, it might make agent $c$'s strategy no longer admissible if it transports some flow between time steps 3 and 4.
The utility of agent $a$ (and analogous for agent $c$) after deviating to strategy $s'_a$ (or $s'_c$, respectively) is
\begin{align}
u_a((s'_a,s_c))=10 (1 - z_c) + \tfrac32 - y_c \text{ and } u_c((s_a,s'_c))=10 (1 - z_a) + \tfrac32 - y_a\text. \label{dev}
\end{align}
By the fact that the profile $\s$ is a Nash equilibrium, we get
$u_a(\s) \ge u_a((s_a',s_c))$ and $u_c(\s) \ge u_c((s_a,s_c'))$ and hence
\begin{equation}
u_a(\s) + u_c(\s)  \ge u_a((s_a',s_c)) + u_c((s_a,s_c'))\text.
\label{NEsum}\end{equation}
By substituting (\ref{NEsum}) with (\ref{ne}) and (\ref{dev}) we get
\begin{align}
  10 (z_a + z_c) + 12 (1 - z_a - z_c) + ( y_a + y_c) &\ge 10 (2-z_a-z_c) + 3 - y_a-y_c\nonumber\\
  -2 (z_a + z_c) + 12  + y_a + y_c &\ge 23 - 10 (z_a+z_c)  - y_a-y_c \nonumber \\
 8 (z_a + z_c) + 2 (y_a + y_c)  &\ge 11.  \label{contra}
\end{align}
This can only be satisfied if $z_a + z_c > 0$, but then, by (\ref{imply}), we get $y_a + y_c \le 1$ which makes inequality (\ref{contra}) unsatisfiable and, thus, leads to the desired contradiction. Therefore, such a Nash equilibrium strategy profile $\s$ cannot exist.
\end{proof}

\subsection{Price of Stability}

We start with a trivial observation:
If we use a uniform price function then the social optimum is a Nash equilibrium because all admissible strategy profiles have identical utilities.

\begin{observation}[Price of Stability Uniform]
    The price of stability for uniform price functions and therefore the whole charging game is $PoS = PoS_\textup{UNI}=1$.
\end{observation}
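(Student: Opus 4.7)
The plan is to prove $PoS_\textup{UNI} = 1$ by showing that under any uniform price profile a welfare-optimal strategy profile is itself a Nash equilibrium, and then to deduce $PoS = 1$ by monotonicity of the infimum in the definition of the price of stability.

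Fix an instance $G$ and a uniform price profile with common value $p$. First, I would pick an admissible welfare-maximizing profile $\s^*$. Concretely, one takes a max flow $f^*$ in the graph obtained by setting all transaction edges to infinite capacity, chooses $f^*$ so as to minimize the total flow routed through transaction edges, and defines $s^*_{b,t}$ as the net transaction-edge flow at $b$'s transaction pair in time step $t$. This profile has welfare $W(\textup{OPT}(G))$, and a short cycle-cancellation argument shows that any alternative max flow in $G_{\s^*}$ must again saturate every transaction edge, yielding admissibility.

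Next, the key observation is that under uniform pricing every admissible strategy yields utility exactly~$0$. Indeed, the utility of agent $b$ is $-p\sum_{t\in\T} s_{b,t}$, and admissibility together with flow conservation at the battery nodes $b_1,\dots,b_T$ forces $\sum_{t\in\T} s_{b,t} = 0$: every transaction edge is saturated in every max flow, so the battery-edge flows satisfy $f((b_t,b_{t+1})) = \sum_{z\le t} s_{b,z}$, and conservation at the terminal node $b_T$ closes the telescoping sum. Combined with \Cref{obs:posutil}, this means $\s^*$ already attains the maximum possible utility $0$ for every agent: any admissible deviation leaves the utility at $0$, while any inadmissible deviation drops it to $-\infty$, so no deviation is strictly improving. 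Hence $\s^*$ is a Nash equilibrium and $PoS_\textup{UNI} = 1$; since $\textup{UNI}$ is contained in the set of all price profiles, taking the infimum over the larger class gives $PoS \le PoS_\textup{UNI} = 1$, and the reverse inequality is immediate.

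The only point requiring any real care is the first step, namely producing an \emph{admissible} welfare-maximizer; the cycle-cancellation argument for this is short but worth spelling out, since otherwise the welfare gap between arbitrary and admissible optima would need a separate treatment. Every remaining step is a direct calculation from the utility formula and flow conservation.
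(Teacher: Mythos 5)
Your proof is correct and follows essentially the same route as the paper: under uniform prices every admissible profile gives each agent utility $0$, so a welfare-optimal admissible profile is a Nash equilibrium, and $PoS \le PoS_\textup{UNI} = 1$ follows since UNI is a subclass of all price profiles. The only difference is that you make explicit two points the paper leaves implicit (that an \emph{admissible} welfare-maximizer exists, via a transaction-flow-minimizing max flow, and that admissibility plus conservation at the battery nodes forces $\sum_{t\in\T} s_{b,t}=0$), which is a welcome refinement rather than a different argument.
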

However, this price function is unnatural, as it does not occur in reality and there is also no incentive to actually implement the social optimum as a Nash equilibrium over any of the other admissible strategy profiles.

Next, we show that for some more natural price profiles, the social welfare of the best Nash equilibrium (if it exists) may be very low.
First, for ascending price profiles, we exploit the fact that holding one unit of electricity for the entire game is more profitable than constantly buying and selling.
However, this entirely blocks the battery from realizing larger flows.

\begin{theorem}[$T>2$ Price of Stability Ascending Prices]
\label{thm:ASCPoS}
    For $T$ time steps, the price of stability for ascending prices is at least
    $PoS_\textup{ASC}\geq\left\lfloor\frac{T}{2}\right\rfloor$.
\end{theorem}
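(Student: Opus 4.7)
The plan is to exhibit, for every $T$, an instance on which every strictly ascending price profile admits a unique Nash equilibrium of welfare~$1$ while the social optimum has welfare $\lfloor T/2\rfloor$; the ratio then witnesses the bound. I would take $H$ to be a single vertex $v$ with no edges and place one battery agent $b$ at $v$ with every battery-edge capacity equal to~$1$. The demand profile is $d(v_t)=+1$ for every odd $t\in\T$ and $d(v_t)=-1$ for every even $t$. Since $H$ has no edges, supply at any $v_t$ can only reach a later demand through the battery, and any supply at time $T$ (when $T$ is odd) is unreachable. The strategy $s_b=(+1,-1,+1,-1,\ldots)$ is admissible and routes one unit of flow per odd--even pair, giving $W(\textup{OPT})\geq\lfloor T/2\rfloor$; the $\lfloor T/2\rfloor$ discharge transactions form a cut of total capacity $\lfloor T/2\rfloor$ between supply and demand, so equality holds.

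Next I would characterise the sole agent's admissible strategies: $s_t\geq 0$ at odd $t$, $s_t\leq 0$ at even $t$ (otherwise a transaction edge carries flow with no outlet in $G_\s$), $s_T=0$ when $T$ is odd (the same argument at $b_T$), together with $B_t\coloneq\sum_{j\leq t}s_j\in[0,1]$ (battery capacity) and $B_T=0$ (any leftover charge would leave some used charge transaction unsaturated in a max flow). Whenever these conditions hold, the chain-shaped graph $G_\s$ has a unique max flow matching each charge to the next discharge in time order, so all transactions are saturated and the strategy is admissible. On such strategies Abel summation yields
\[
u_b(\s,\p)=-\sum_{t=1}^T s_t p_t=\sum_{t=1}^{T-1}B_t(p_{t+1}-p_t).
\]
With strictly ascending prices every factor $p_{t+1}-p_t$ is positive and $B_t\leq 1$, so the utility is uniquely maximised at $B_t=1$ for all $t\in[T-1]$. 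Unwinding forces $s_b=(1,0,\ldots,0,-1)$ for $T$ even (with the obvious shift $s_{T-1}=-1,s_T=0$ for $T$ odd), which is the unique best response of the sole agent and hence the unique Nash equilibrium; the induced max flow has value exactly~$1$.

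Combining the two bounds gives $W(\textup{OPT})/W(\textup{bestNE})=\lfloor T/2\rfloor$ on this instance for every strictly ascending price profile, proving $PoS_\textup{ASC}\geq\lfloor T/2\rfloor$. The main technical hurdle I anticipate is the admissibility characterisation, specifically ruling out leftover battery charge via a flow-conservation argument at $b_T$ and verifying that the chain topology of the battery edges renders the max flow unique, so that the natural ``match each charge to the next discharge in time order'' flow certifies admissibility precisely when $B_T=0$ and $B_t\in[0,1]$ throughout.
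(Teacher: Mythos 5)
Your proposal is correct and matches the paper's proof essentially exactly: the same instance (a single battery agent at an isolated node with demand $+1$ at odd and $-1$ at even time steps, all capacities $1$), the same optimum of $\left\lfloor\frac{T}{2}\right\rfloor$, and the same unique equilibrium of buying once early and selling once late with welfare $1$. The only difference is that you spell out, via the admissibility characterization and the Abel-summation rewriting of the utility, the uniqueness of the best response that the paper merely asserts.
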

\begin{proof}
    Let $G$ have one node $a$ that is also an agent.
    Let $d(a_i)$ be $+1$ for odd time steps $i$ and $-1$ for even time steps and we use an arbitrary ascending price profile $\p$.
    For $T=4$, this network is visualized in \Cref{fig:ASCPoS} with a price profile that yields $p_t = t$.
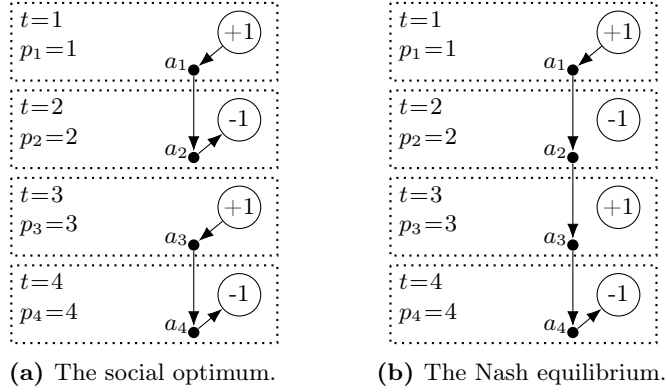
\begin{figure}[h]
\def\toffx{2.04}
\def\toffy{1.16}
\def\toffup{0.39}
\def\toffdown{0.64}
\def\toffleft{3}
\def\toffright{0.5}
    \centering
    \begin{subfigure}{0.4\columnwidth}
    \centering
    \begin{tikzpicture}
        \agent{a}{1}{+1}
        \agent{a}{2}{-1}
        \agent{a}{3}{+1}
        \agent{a}{4}{-1}
        
        \charge{a1}{}
        \keep{a1}{}
        \discharge{a2}{}
        
        \charge{a3}{}
        \keep{a3}{}
        \discharge{a4}{}
        
        \boxes{1}{1/1,2/2,3/3,4/4}
    \end{tikzpicture}
    \caption{The social optimum.}
    \end{subfigure}
    \begin{subfigure}{0.4\columnwidth}
    \centering
    \begin{tikzpicture}
        \agent{a}{1}{+1}
        \agent{a}{2}{-1}
        \agent{a}{3}{+1}
        \agent{a}{4}{-1}
        
        \charge{a1}{}
        \keep{a1}{}
        \keep{a2}{}
        \keep{a3}{}
        \discharge{a4}{}
        
        \boxes{1}{1/1,2/2,3/3,4/4}
    \end{tikzpicture}
    \caption{The Nash equilibrium.}
    \end{subfigure}
    \caption{(a) The strategy profile with the maximum flow value of $\left\lfloor\frac{T}{2}\right\rfloor$. (b) $G_\s$ of a Nash equilibrium $\s$ with flow value $1$. All capacities are $1$. The strategy profile in (a) is not a Nash equilibrium under any ascending price function.}
    \label{fig:ASCPoS}
\end{figure}

    The optimal flow has value $\left\lfloor\frac{T}{2}\right\rfloor$, where for each odd time step $t$, the unit of energy is sent to the consumer in the next time step $t+1$.
    In the only Nash equilibrium agent $a$ buys electricity in the first time step and sells in the last (or second to last for odd $T$) which yields a flow value of $1$.
    Thereby, we get a price of stability of at least $PoS_\textup{ASC}\geq\left\lfloor\frac{T}{2}\right\rfloor$.
\end{proof}

We also show that the natural class of price functions SIGN fails at having a low price of stability. Remember, a SIGN function has a negative price in time steps with an oversupply of electricity, a positive price in time steps with an undersupply and a price of $0$ otherwise.

\begin{theorem}[$T > 2$ Price of Stability Supply-Sign Prices]
    For a supply-sign price function, the price of stability is $PoS_\textup{SIGN}=\infty$.
\end{theorem}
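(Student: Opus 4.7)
The plan is to exhibit a small instance in which every supply-sign price profile forces the unique Nash equilibrium to be the trivial do-nothing profile, while a concrete admissible strategy profile attains strictly positive flow. Since $PoS_\textup{SIGN}=\sup_G\inf_{p\in\textup{SIGN}}\tfrac{W(\textup{OPT}(G))}{W(\textup{bestNE}(G,p))}$, exhibiting one such instance $G$ suffices.

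I will use $T=3$ time steps with two nodes $u,v$ and no network edges between them in any time step, together with a single battery agent $a$ located at $u$ with battery capacity $1$. The supplies and demands are set to $d(u_1)=+1$, $d(v_1)=-2$, $d(u_2)=d(v_2)=0$, $d(u_3)=-1$, and $d(v_3)=+1$. The per-time-step demand sums are $-1,0,0$, so any price profile $p\in\textup{SIGN}$ must satisfy $p_1>0$ and $p_2=p_3=0$.

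For the numerator, the strategy $s_a=(1,0,-1)$ is admissible: the unique maximum flow on $G_\s$ consists of a single unit routed along $x\to u_1\to a_1\to a_2\to a_3\to u_3\to y$, saturating both transaction edges of $a$. Hence $W(\textup{OPT})\geq 1$. For the denominator, because $p_2=p_3=0$, agent $a$'s utility on any admissible strategy simplifies to $-s_{a,1}\,p_1$. The battery starts empty so $s_{a,1}\geq 0$, and since $p_1>0$, any positive value of $s_{a,1}$ yields strictly negative utility compared with the do-nothing profile. A short case analysis using admissibility shows that every nonzero admissible strategy has the form $(x,0,-x)$ with $x\in(0,1]$: charging at times $2$ or $3$ is blocked because $u_2$ and $u_3$ have no supply to saturate the corresponding edge, and both of $a$'s transaction edges can only be saturated if the charging and discharging amounts coincide. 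Therefore the unique Nash equilibrium is $s_a=(0,0,0)$, which has welfare $0$.

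The only real obstacle is the admissibility bookkeeping that rules out deviations exploiting the free time steps $p_2=p_3=0$; this is handled by observing that the only supply accessible to $a$ is at $u_1$ and that a nonzero charge-discharge cycle through $a$ must be balanced. Combining the two bounds gives $W(\textup{OPT})/W(\textup{bestNE})=1/0=\infty$ for every $p\in\textup{SIGN}$, and hence $PoS_\textup{SIGN}=\infty$.
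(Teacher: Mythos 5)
Your proposal is correct and follows essentially the same route as the paper: both exhibit a single small instance with dummy supply/demand placed in parts of the network that are unreachable for flow, exploiting that SIGN prices depend only on the per-time-step aggregate demand, so that every supply-sign price profile makes any flow-enabling trade strictly unprofitable and the unique Nash equilibrium is the do-nothing profile of welfare $0$ while the optimum achieves welfare $1$. The paper's construction uses two agents and a negative selling price whereas yours uses a single agent with $p_1>0$ and $p_2=p_3=0$, but this is only a cosmetic difference in the same counterexample technique.
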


\begin{proof}
We can exploit the fact that the prices only depend on the total demand in each time step, but not on the structure of the given network. Thus, using dummy agents to create dummy demand and supply in inaccessible parts of the network, we can create a bad example.
Consider the network and strategy profile in \Cref{fig:SSPoA} for some arbitrary $q,r > 0$.

\begin{figure}[h]
\def\toffx{2.04}
\def\toffy{1.44}
\def\toffup{0.4}
\def\toffdown{0.65}
\def\toffleft{3}
\def\toffright{0.5}
    \centering
    \begin{tikzpicture}
        \agent{a}{1}{$+1$}
        \agent{b}{1}{}
        \vertex{c}{1}{$-1$}
        \agent{a}{2}{}
        \agent{b}{2}{}
        \vertex{c}{2}{$+1$}
        \agent{a}{3}{}
        \agent{b}{3}{$-1$}
        \vertex{c}{3}{}

        \charge[cred, ultra thick]{a1}{$1$}
        \keep[cred, ultra thick]{a1}{$1$}
        \discharge[cred, ultra thick]{a2}{$1$}
        
        \keep{a2}{$1$}
        \keep{b1}{$1$}

        \path (a2) edge["1", pos=0.37, cred, ultra thick] (b2);
        
        \charge[cred, ultra thick]{b2}{$1$}
        \keep[cred, ultra thick]{b2}{$1$}
        \discharge[cred, ultra thick]{b3}{$1$}
        
        \boxes{3}{1/0,2/-q,3/r}
    \end{tikzpicture}
    \caption{In this instance, with a supply-sign pricing function, the strategy profile given by the thick red transaction edges produces a flow value of 1. However, it yields negative utility for agent $a$ and thus is not a Nash equilibrium.}
    \label{fig:SSPoA}
\end{figure}
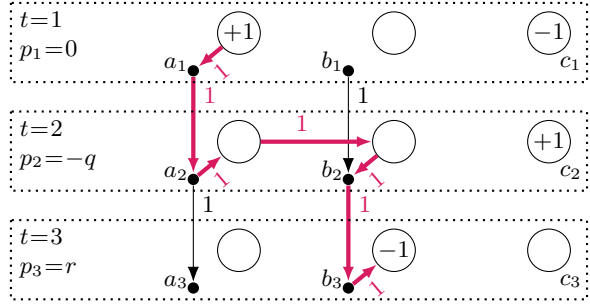

The given action profile has a flow of $1$ but results in a negative utility of $-q$ for agent $a$. The only equilibrium of this instance is $s_b=s_a=(0,0,0)$, resulting in an infinite price of stability.
\end{proof}


\subsection{Price of Anarchy}

Now, we show that if our game admits equilibria then they might be infinitely bad for any price profile.
To make matters worse, this result even holds for $k$-strong equilibria for any integer $k\leq T-2$ for $T$ time steps.
\begin{theorem}[$T>2$ Price of Anarchy, Low Cooperation, Any Prices]
\label{thm:cooperation}
    For $T>2$ and $k\leq T-2$, the $k$-strong price of anarchy is $PoA_{\textup{ANY}}^{k\textup{-strong}}=\infty$.
\end{theorem}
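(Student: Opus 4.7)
My plan is to construct a single instance in which the all-zero strategy profile $\s = \mathbf{0}$ is a $k$-strong equilibrium under every price profile, while the social optimum achieves welfare $1$. Since $W(\mathbf{0}) = 0$, this immediately yields $\textup{PoA}^{k\textup{-strong}}_{\textup{ANY}} = \infty$ irrespective of $p$.

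The instance I would use is a chain of $T-1$ battery agents, where the $i$-th agent lives at a distinct vertex $v_i$ and the host graph $H$ is the path on $v_1, \ldots, v_{T-1}$. The key design trick is to exploit time-dependent edge capacities: I set the grid capacity $\kappa(v_i v_{i+1}, t)$ to $1$ exactly when $t = i + 1$ and to $0$ otherwise, place a unit of supply at $(v_1, 1)$ and a unit of demand at $(v_{T-1}, T)$, and give each battery edge capacity $1$. The intended optimum has the $i$-th agent charging one unit at time $i$ and discharging it at time $i+1$, routing one unit of flow through the entire chain, so $W(\textup{OPT}) = 1$.

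To show that $\mathbf{0}$ is $k$-strong for every price profile, I would first establish that this chain is the only $x$-$y$ route available to any positive flow. From the node $(v_1, 1)$ the only outgoing edge with positive capacity is the transaction edge into agent $1$'s battery (no grid edge is active at time $1$), so the unit must be charged into that battery. Once stored, it can only be usefully discharged at $(v_1, 2)$, because at any later time $t > 2$ the vertex $v_1$ is isolated from the rest of the grid. From $(v_1, 2)$ the only active grid edge is $v_1 v_2$, so the flow must reach $(v_2, 2)$ and charge agent $2$'s battery; iterating this argument yields the full chain ending with agent $T-1$ discharging at $(v_{T-1}, T)$. Now take any coalition $C$ with $|C| \le T - 2$; since there are $T-1$ chain agents, some agent $j \notin C$ has strategy $\mathbf{0}$, and hence its transaction edges have capacity $0$, breaking the chain at position $j$. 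The maximum flow in the deviated profile is therefore $0$, every positive-capacity transaction edge of a deviator is unsaturated in the unique zero maximum flow, and any deviator who charges or discharges at all becomes inadmissible with utility $-\infty$. The only deviation that avoids $-\infty$ is the all-zero one, which is not a strict improvement. Consequently no coalition of size at most $T - 2$ can strictly improve all its members, whatever the prices.

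The main obstacle I anticipate is ruling out exotic admissible deviations, for instance a single agent charging and discharging at unusual times or two agents attempting to share flow along a shortcut. These possibilities are eliminated by the uniqueness-of-path argument: because at each time step at most one grid edge has positive capacity and both supply and demand are concentrated at a single $(v, t)$, the only positive-flow structure is the full chain. Once this cut-based characterization is in place, the claim $\textup{PoA}^{k\textup{-strong}}_{\textup{ANY}} = \infty$ follows simultaneously for every price profile and for every $k \le T - 2$.
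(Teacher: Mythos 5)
Your construction (a path of $T-1$ agents with the grid edge between $v_i$ and $v_{i+1}$ active only at time $i+1$, supply at $(v_1,1)$ and demand at $(v_{T-1},T)$) and your argument that the all-zero profile is $k$-strong because any coalition of size at most $T-2$ misses some chain agent and any actual deviation becomes inadmissible are correct and essentially identical to the paper's proof. You merely spell out the admissibility/unique-path details that the paper leaves implicit.
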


\begin{proof}
    Let $G$ be a directed path $(v^1, \dots, v^{T-1})$ with one agent per node each.
    Let $d(v^1_1)=+1$ and $d(v^{T-1}_T) = -1$ and let all battery capacities be $1$.
    The grid edge capacities are $1$ for all edges of the form $(v^i_{i+1}, v^{i+1}_{i+1})$ for $i \leq T-2$, otherwise they are $0$.
    See an example in \Cref{fig:strong-poa-construction}.
    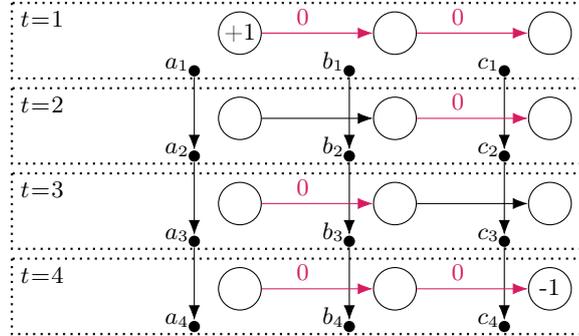
\begin{figure}[b!]
\def\toffx{2.04}
\def\toffy{1.13}
\def\toffup{0.4}
\def\toffdown{0.6}
\def\toffleft{3}
\def\toffright{0.5}
    \centering
    \begin{tikzpicture}
        \agent{a}{1}{+1}
        \agent{b}{1}{}
        \agent{c}{1}{}
        \agent{a}{2}{}
        \agent{b}{2}{}
        \agent{c}{2}{}
        \agent{a}{3}{}
        \agent{b}{3}{}
        \agent{c}{3}{}
        \agent{a}{4}{}
        \agent{b}{4}{}
        \agent{c}{4}{-1}

        \foreach \bat in {a,b,c} {
            \keep{\bat 1}{}
            \keep{\bat 2}{}
            \keep{\bat 3}{}
        }
        
        \path (a2) edge (b2);
        \path (b3) edge (c3);
        \path (b1) edge["0", pos=0.37, cred] (c1);
        \path (a4) edge["0", pos=0.37, cred] (b4);
        
        \path (a1) edge["0", pos=0.37, cred] (b1);
        \path (b2) edge["0", pos=0.37, cred] (c2);
        \path (a3) edge["0", pos=0.37, cred] (b3);
        \path (b4) edge["0", pos=0.37, cred] (c4);
    \boxes{3}{1/,2/,3/,4/}
    \end{tikzpicture}
    \caption{This is graph $G_\s$ for the graph in \Cref{thm:cooperation} with $T=4$ with the empty strategy profile $\s$, which is a Nash equilibrium.
    All black edges have capacity $1$, the red edges have capacity $0$.
    This graph needs cooperation of at least $T-1=3$ agents to escape from strategy profile $\s$ and achieve a higher flow value.
    Hence, for $1$- and $2$-strong equilibria, this example shows a price of anarchy of $\infty$.}
    \label{fig:strong-poa-construction}
\end{figure}
    
    Therefore, there is exactly one path between the only supply node and the only demand node in $G$ passing through a single battery edge of each agent.
    This means that the empty strategy profile is a $k$-strong equilibrium for any $k\leq T-2$.
    The social optimum has a flow of $1$ so the price of anarchy is infinite.
\end{proof}
While for low cooperation, i.e., for coalitions of up to $T-2$ agents, we have an infinite price of anarchy, this bound becomes finite if we raise the level of cooperation:
If at least $T-1$ agents cooperate in an instance with $T$ time steps, our upper bound for the price of anarchy is $T$.
This means that a game with more time steps needs more cooperation to have a finite price of anarchy.

To prove this statement, we look at the difference of the flow value in a Nash equilibrium and a maximum flow value in any strategy profile.
With such a high level of cooperation, any augmentation of the equilibrium flow lowers the flow on a battery edge.
This limits how much the equilibrium flow can be augmented.

\begin{theorem}[$T\geq 2$ Price of Anarchy, High Cooperation, Any Prices]
    \label{thm:poa-t-strong}
    For $T$ time steps and an ascending price function, the price of anarchy considering $(T-1)$-strong equilibria is
    $PoA_{\textup{ASC}}^{(T-1)\textup{-strong}} \leq T$.
\end{theorem}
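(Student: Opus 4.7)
The plan is to show $|f^*|\le T\cdot|f|$ for any $(T-1)$-strong equilibrium $\s$ with induced max flow $f$ of value $W(\s)$ and the social optimum $\s^*$ with induced max flow $f^*$ of value $W(\s^*)$. The argument is a max-flow decomposition built around the hint that any augmentation of $f$ must reverse at least one battery-edge unit.

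First, I would decompose both $f$ and $f^*$ into $x$-$y$ path flows. A path from a supply at time $t_1$ to a demand at time $t_2$ crosses $t_2-t_1\le T-1$ time-layer boundaries and therefore uses exactly $t_2-t_1$ battery edges, touching at most $T-1$ distinct agents. Summing over the decomposition of $f$ yields the global bound
\[
\sum_{b\in B}\sum_{t=1}^{T-1}f((b_t,b_{t+1}))\le(T-1)\,|f|.
\]

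Second, I would compare $f$ and $f^*$ in an auxiliary graph in which every transaction edge receives the pointwise maximum of its capacities in $\kappa_\s$ and $\kappa_{\s^*}$, so that both flows are feasible; standard residual-graph decomposition then writes the difference as a family of $x$-$y$ augmenting paths with total value $|f^*|-|f|$. The central claim is that every such augmenting path must traverse at least one \emph{reverse} battery edge $(b_{t+1},b_t)$. For if an augmenting path $P$ used only forward edges, then its transaction edges would belong to a set $C_P$ of at most $T-1$ agents, and these agents could jointly deviate by implementing exactly the extra charging/discharging prescribed by $P$: each $b\in C_P$ charges an additional $\epsilon$ at its entry time on $P$ and discharges $\epsilon$ at its exit time. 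This joint deviation is admissible because the augmented flow itself certifies that the new transaction edges are saturated, and under ascending prices each $b\in C_P$ strictly gains $\epsilon\,(p_{t^\mathrm{out}_b}-p_{t^\mathrm{in}_b})>0$. That contradicts $\s$ being $(T-1)$-strong, so no purely forward augmenting path exists.

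Third, each augmenting path therefore consumes at least one unit of residual reverse-battery capacity per unit of its own flow value. The total pool of such reverse capacity is exactly $\sum_{b,t}f((b_t,b_{t+1}))$, so
\[
|f^*|-|f|\ \le\ \sum_{b,t}f((b_t,b_{t+1}))\ \le\ (T-1)\,|f|,
\]
which gives $|f^*|\le T\cdot|f|$, as claimed.

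The main obstacle is making the coalition-deviation argument in the second step fully rigorous in the sense of admissibility of the paper, where \emph{every} max flow of the deviation graph — not just the one containing $P$ — must saturate the coalition's new transaction edges. I would handle this by choosing $P$ to be a \emph{shortest} augmenting path, so that any alternative max flow bypassing the new edges would already yield an augmenting path in $G_\s$, contradicting the maximality of $f$. A secondary subtlety is weakly ascending prices with equal consecutive values, where an individual agent's strict gain may collapse to zero; this case is resolved either by interpreting ``ascending'' as strictly increasing or by a small generic perturbation of $\p$ that preserves the $(T-1)$-strong equilibrium set.
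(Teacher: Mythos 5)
Your proof is correct and follows essentially the same route as the paper's: there, too, the key step is that any single-path augmentation of the equilibrium flow must decrease flow on some battery edge (otherwise the at most $T-1$ agents whose battery edges the path uses would form an improving coalition under ascending prices), and the total possible augmentation is then bounded by the battery flow carried by $f$, which is at most $(T-1)\,|f|$, giving $|f_\textup{OPT}| \leq T\,|f|$. The additional care you take with admissibility of the coalition deviation and with weakly ascending prices only makes explicit details the paper leaves implicit.
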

\begin{proof}
Let $G$ be any electricity network graph with $T\geq2$ time steps and let strategy profile $\s$ be any $(T-1)$-strong equilibrium.
Let $f$ be a maximum flow in $G_\s$ and $f_\textup{OPT}$ be a maximum flow in~$G$.
We show that $\frac{|f_\textup{OPT}|}{|f|}\leq T$.

Consider the flow $f$ in the context of $\kappa$ instead of $\kappa_\s$, i.e., with the maximum battery capacities instead of the ones induced by strategy $\s$.
First, observe that the only way to augment flow $f$ with a single path is to decrease the flow on some battery edge $b_t$.
Otherwise, since we consider ascending prices, the single path yields an improving coalition of size at most $T-1$ because it can only use $T-1$ battery edges.
This would contradict $\s$ being a $(T-1)$-strong equilibrium.

Now consider $|f_\textup{diff}| = |f_\textup{OPT}| - |f|$, the difference of flow value between $f_\textup{OPT}$ and $f$, and decompose it into infinitesimally small augmentations, each along a single path.
Each augmentation decreases the flow along some battery edge $b_t$ in the original flow $f$ (which is why the respective agent would not agree to this joint strategy change).
Therefore, the flow difference is at most $|f_\textup{diff}| = |f|(T-1)$.
Hence,
\[
    \frac{|f_\textup{OPT}|}{|f|} \leq \frac{|f|(T-1) + |f|}{|f|} = T\text.\qedhere
\]
\end{proof}
Note that the upper bound of $T$ from \Cref{thm:poa-t-strong} is asymptotically tight for ascending price profiles, since for these price profiles, we have a lower bound on the price of stability and, therefore, also on the price of anarchy by \Cref{thm:ASCPoS}.

\section{Conclusion}
We introduce a novel model for selfish behavior in energy networks where strategic agents equipped with batteries interact and can use excess energy to profit. Our research uncovers that this selfish behavior severely impacts the obtained energy flow value, i.e., the utilization of excess energy due to the batteries is much lower than theoretically possible. Given that today we have millions of such battery-equipped agents connected to the power grid, this aspect of energy networks deserves further studies and should be included in other models. Our stylized model is just the first step in this research endeavor.

For our model, it remains open if there are special cases or natural pricing schemes with better price of anarchy/stability bounds and how to compute the best price profile for a given instance. 
Structural properties of the energy network might also play a role in this. 
Moreover, our model could be augmented with local prices that depend on the position in the network and not only on the time step.
This could even include incentives for self-consumption already present in some energy markets.
Furthermore, our agents precommit to a strategy knowing the full game in advance which could be changed to a more dynamic game with uncertainty.
Finally, experiments with real-world data might shed light on the actual impact of selfish battery charging decisions by algorithmic trading agents.

\bibliographystyle{splncs04}
\bibliography{references}

\begin{thebibliography}{10}
\providecommand{\url}[1]{\texttt{#1}}
\providecommand{\urlprefix}{URL }
\providecommand{\doi}[1]{https://doi.org/#1}

\bibitem{cheng2024optimalpricingformulasmart}
Cheng, J., Chen, G., Wu, Z., Mu, Y.: An optimal pricing formula for smart grid based on {Stackelberg} game (2024), \url{https://arxiv.org/abs/2407.09948}

\bibitem{8648326}
Cheng, L., Yu, T.: Game-theoretic approaches applied to transactions in the open and ever-growing electricity markets from the perspective of power demand response: An overview. IEEE Access  \textbf{7},  25727--25762 (2019). \doi{10.1109/ACCESS.2019.2900356}

\bibitem{ev-charging-game}
Deori, L., Margellos, K., Prandini, M.: Price of anarchy in electric vehicle charging control games: {When} {Nash} equilibria achieve social welfare. Automatica  \textbf{96},  150--158 (2018), \url{https://doi.org/10.1016/j.automatica.2018.06.043}

\bibitem{Dorahaki24}
Dorahaki, S., Muyeen, S.M., Amjady, N.: A {Stackelberg} game theory model for integrated community energy storage systems. {IEEE} Access  \textbf{12},  152908--152920 (2024). \doi{10.1109/ACCESS.2024.3481155}

\bibitem{downs}
Downs, A.: An economic theory of political action in a democracy. Journal of Political Economy  \textbf{65}(2),  135--150 (1957), \url{https://www.jstor.org/stable/1827369}

\bibitem{voronoigames}
D{\"u}rr, C., Thang, N.K.: Nash equilibria in voronoi games on graphs. In: ESA. pp. 17--28 (2007). \doi{10.1007/978-3-540-75520-3_4}

\bibitem{ehrhart2022congestion}
Ehrhart, K.M., Eicke, A., Hirth, L., Ocker, F., Ott, M., Schlecht, I., Wang, R.: Congestion management games in electricity markets. ZEW Discussion Paper (22-060) (2022). \doi{10.2139/ssrn.4300874}

\bibitem{eurostat}
{Eurostat}: Renewable energy statistics  (2024), \url{https://ec.europa.eu/eurostat/statistics-explained/index.php?title=Renewable_energy_statistics}

\bibitem{feldman-hotelling}
Feldman, M., Fiat, A., Obraztsova, S.: Variations on the hotelling-downs model. In: AAAI. pp. 496--501 (2016). \doi{10.1609/aaai.v30i1.10054}

\bibitem{Fochesato22}
Fochesato, M., Cenedese, C., Lygeros, J.: A {Stackelberg} game for incentive-based demand response in energy markets. In: CDC. pp. 2487--2492 (2022). \doi{10.1109/CDC51059.2022.9993196}

\bibitem{9026313}
Gao, B., Liu, X., Wu, C., Tang, Y.: Game-theoretic energy management with storage capacity optimization in the smart grids. J. Mod. Power Syst. Clean Energy  \textbf{6}(4),  656--667 (2018). \doi{10.1007/s40565-017-0364-2}

\bibitem{10.1145/3396851.3397701}
Hekkelman, B., La~Poutr\'{e}, H.: Fairness in power flow network congestion management with outer matching and principal notions of fair division. In: ACM e-Energy. p. 106–115 (2020). \doi{10.1145/3396851.3397701}

\bibitem{10.1145/3538637.3538843}
Hekkelman, B., La~Poutr\'{e}, H.: Fairness vs welfare: a hybrid congestion aftermarket. In: ACM e-Energy. p. 93–104 (2022). \doi{10.1145/3538637.3538843}

\bibitem{8905496}
Hekkelman, B., La~Poutré, H.: Fairness in smart grid congestion management. In: ISGT-Europe (2019). \doi{10.1109/ISGTEurope.2019.8905496}

\bibitem{hoogsteen17}
Hoogsteen, G., Molderink, A., Hurink, J.L., Smit, G.J., Kootstra, B., Schuring, F.: Charging electric vehicles, baking pizzas, and melting a fuse in lochem. CIRED  \textbf{2017}(1),  1629--1633 (2017). \doi{10.1049/oap-cired.2017.0340}

\bibitem{hotelling}
Hotelling, H.: Stability in competition. The Economic Journal  \textbf{39}(153),  41--57 (1929). \doi{10.2307/2224214}

\bibitem{10.3389/fenrg.2022.1009217}
Huang, W., Li, H.: Game theory applications in the electricity market and renewable energy trading: A critical survey. Frontiers in Energy Research  \textbf{10} (2022). \doi{10.3389/fenrg.2022.1009217}

\bibitem{energy-flow}
Kiyomi, M., Uno, T., Matsui, T.: Efficient algorithms for the electric power transaction problem. In: WINE. pp. 602--611 (2005). \doi{10.1007/11600930_60}

\bibitem{ijcai-21}
Krogmann, S., Lenzner, P., Molitor, L., Skopalik, A.: Two-stage facility location games with strategic clients and facilities. In: {IJCAI}. pp. 292--298 (2021). \doi{10.24963/ijcai.2021/41}

\bibitem{Liu17}
Liu, N., Yu, X., Wang, C., Wang, J.: Energy sharing management for microgrids with {PV} prosumers: {A} {Stackelberg} game approach. {IEEE} Trans. Ind. Informatics  \textbf{13}(3),  1088--1098 (2017). \doi{10.1109/TII.2017.2654302}

\bibitem{double-auctions}
Manvi, B., Chandlekar, S., Subramanian, E.: Optimizing prosumer policies in periodic double auctions inspired by equilibrium analysis. In: IJCAI. pp. 2931--2939 (2024). \doi{10.24963/ijcai.2024/325}

\bibitem{collusion-prevention}
Mibuari, E.M.: Learning to Mitigate AI Collusion in Electricity Markets. Thesis, Chapter 4 (2024), \url{https://nrs.harvard.edu/URN-3:HUL.INSTREPOS:37379051}

\bibitem{su11102763}
Naz, A., Javaid, N., Rasheed, M.B., Haseeb, A., Alhussein, M., Aurangzeb, K.: Game theoretical energy management with storage capacity optimization and photo-voltaic cell generated power forecasting in micro grid. Sustainability  \textbf{11}(10) (2019). \doi{10.3390/su11102763}

\bibitem{Peters2018}
Peters, H., Schr{\"o}der, M., Vermeulen, D.: Hotelling's location model with negative network externalities. International Journal on Game Theory  \textbf{47}(3) (2018). \doi{10.1007/S00182-018-0615-0}

\bibitem{Rahi19}
Rahi, G.E., Etesami, S.R., Saad, W., Mandayam, N.B., Poor, H.V.: Managing price uncertainty in prosumer-centric energy trading: {A} prospect-theoretic {Stackelberg} game approach. {IEEE} Trans. Smart Grid  \textbf{10}(1),  702--713 (2019). \doi{10.1109/TSG.2017.2750706}

\bibitem{alex-network-investment}
Schmand, D., Schr{\"o}der, M., Skopalik, A.: Network investment games with wardrop followers. In: ICALP. pp. 151:1--151:14 (2019). \doi{10.4230/LIPIcs.ICALP.2019.151}

\bibitem{home-batteries}
{SolarPower Europe}: European market outlook for battery storage 2025-2029  (2025), \url{https://www.solarpowereurope.org/insights/outlooks/european-market-outlook-for-battery-storage-2025-2029}

\bibitem{algo-trading}
{The Netherlands Authority for Consumers and Market}: Algorithmic trading in wholesale energy markets: Key findings of an exploratory market study by the acm  (2024), \url{https://www.acm.nl/en/publications/acm-market-study-algorithmic-trading-wholesale-energy-markets}

\bibitem{10.1063/5.0165108}
Ullah, K., Ishaq, M., Albalwi, M.D., Bonyah, E., Hussain, N.F., Rasool, T., Rasool, A., Ahmad, H., Ahmad, Z.: Application of game theory in modern electrical power system (a review). AIP Advances  \textbf{14}(1),  010701 (2024). \doi{10.1063/5.0165108}

\bibitem{Stackelberg}
Von~Stackelberg, H.: Market structure and equilibrium. Springer (2010). \doi{10.1007/978-3-642-12586-7}

\bibitem{6266720}
Yang, P., Tang, G., Nehorai, A.: A game-theoretic approach for optimal time-of-use electricity pricing. IEEE Transactions on Power Systems  \textbf{28}(2),  884--892 (2013). \doi{10.1109/TPWRS.2012.2207134}

\end{thebibliography}

\end{document}